\documentclass[journal]{IEEEtran}

\usepackage{amsmath,amssymb,amsfonts}
\usepackage{graphicx}
\usepackage{booktabs}
\usepackage{array}
\usepackage{url}
\usepackage{xcolor}
\usepackage[hidelinks]{hyperref}
\usepackage{cite}

\graphicspath{{figures/}}

\newtheorem{proposition}{Proposition}
\newtheorem{corollary}{Corollary}
\newtheorem{assumption}{Assumption}

\newcommand{\E}{\mathbb{E}}
\newcommand{\Var}{\mathrm{Var}}

\begin{document}

\title{Width, Memory, and Delay: A Resource Accounting\\ for the Limits of Flat Multi-Agent Systems}

\author{Oleksandr~Kuznetsov and Emanuele~Frontoni
\thanks{Manuscript prepared \today.}
\thanks{O.~Kuznetsov is with the Department of Theoretical and Applied Sciences,
eCampus University, Via Isimbardi 10, 22060 Novedrate (CO), Italy; with the
SMARTEST Research Center, eCampus University, Via Isimbardi 10, 22060 Novedrate
(CO), Italy; and with the Department of Intelligent Software Systems and
Technologies, School of Computer Science and Artificial Intelligence,
V.~N.~Karazin Kharkiv National University, 4 Svobody Sq., 61022 Kharkiv,
Ukraine (e-mail: oleksandr.kuznetsov@uniecampus.it; kuznetsov@karazin.ua;
ORCID: 0000-0003-2331-6326).}
\thanks{E.~Frontoni is with the Department of Political Sciences, Communication
and International Relations, University of Macerata, Via Crescimbeni, 30/32,
62100 Macerata, Italy (e-mail: emanuele.frontoni@unimc.it; ORCID:
0000-0002-8893-9244).}}

\markboth{IEEE Transactions on Cybernetics}%
{Kuznetsov \MakeLowercase{\textit{et al.}}: Width, Memory, and Delay in Flat Multi-Agent Systems}

\maketitle

\begin{abstract}
A recurring question in the design of scalable multi-agent systems---from robot
swarms to collectives of large-language-model (LLM) agents---is whether adding
more agents can, on its own, overcome performance limits, or whether a
qualitatively \emph{deeper} organization is required. A recent preprint argues
that flat, homogeneous multi-agent systems face an irreducible,
population-independent ``causal floor'' on achievable error, removable only by
hierarchical (nested-loop) organization. Using a controlled disturbance-rejection
testbed with an exactly computable optimum, we show this conclusion is too
strong and replace it with a quantitative resource model built on three
resources: population \emph{width} $N$, per-agent internal-model \emph{memory}
$d$, and prediction across the observation \emph{delay} $\tau$. We establish
three claims. (i) The achievable floor is governed not by architectural
hierarchy but by per-agent internal-model content: a flat, homogeneous swarm
whose agents carry a matched internal model of the disturbance matches or beats
a designed two-loop hierarchy at equal per-agent memory---so temporal depth can
be dynamical (recurrent memory), not architectural (nesting). (ii) The three
resources are \emph{not mutually interchangeable}; we chart the exchange rates
and the hard non-exchange boundaries on an explicit width$\times$memory map,
including a strict equal-total-state-budget comparison. (iii) A residual floor
is set by the observation delay and the environment's unpredictability over that
horizon, which we verify against the optimal controller (the exact
minimum-variance characterization is given in Proposition~\ref{prop:delay}). We
quantify the price of replacing oracle knowledge of the disturbance spectrum with
online learning, provide a preliminary robustness check against a mild bounded nonlinearity and a
spatially-extended plant, and distill four design rules for practitioners. All
experiments are released with a reproducible pipeline.
\end{abstract}

\begin{IEEEkeywords}
Multi-agent systems, scalability, minimum-variance control, internal model
principle, disturbance rejection, time delay, adaptive control, resource
trade-offs, large-language-model agents.
\end{IEEEkeywords}

\section{Introduction}
\IEEEPARstart{A}{practical} question now dominates the engineering of
multi-agent systems: does adding more agents keep paying off? In the
language-model-agent literature the debate is explicit and unresolved---several
recent studies report that a well-tuned single agent can match or beat elaborate
multi-agent workflows \cite{xu2026single,tran2026single}, while others find that
scaling helps only when the added agents bring genuine \emph{diversity} rather
than replication \cite{qian2024scaling}. The same tension appears in robot
swarms and networked control, where consensus and averaging arguments promise
gains with scale \cite{olfati2007,jadbabaie2003,ren2005}, yet some limits appear
stubbornly independent of population size.

A recent preprint \cite{kovalov_preprint} sharpens the pessimistic side into an
impossibility claim: flat, homogeneous multi-agent systems possess a
population-independent ``causal floor'' on achievable error, and only
hierarchical (nested-loop) organization can lower it. The intuition is
attractive---some limits do seem to demand a deeper temporal organization---but,
we argue, the claim conflates two distinct notions under the word ``flat'':
\emph{flat in architecture} (no nested control loops) and \emph{flat in memory}
(agents without internal dynamical state). The impossibility result holds for
the second notion and fails for the first, and the two come apart exactly where
the interesting engineering lives.

This paper replaces the single flat/deep axis with a \emph{resource triangle}
---width, memory, and delay-prediction (Fig.~\ref{fig:concept})---and asks, for each pair of resources,
whether one can be traded for another and at what rate. The answer is a mix of
clean exchange laws and hard prohibitions, all recoverable from classical
stochastic control theory (minimum-variance control \cite{astrom1970,astrom1973},
the internal model principle \cite{francis1976}, and the Bode/waterbed
constraints \cite{bode1945,seron1997}) once the problem is posed as resource
accounting rather than architecture. Our contribution is not a new control law;
it is the reframing, an explicit map of where exchange is and is not possible,
and a careful empirical verification---against a computable optimum---that pins
the constants. Because the accounting is stated in terms of averageable noise,
structured disturbance, and predictability across a delay, it is agnostic to
whether an ``agent'' is a controller, a robot, or an LLM; we return to this
bridge in the conclusion.

\begin{figure*}[!t]
\centering
\includegraphics[width=0.8\textwidth]{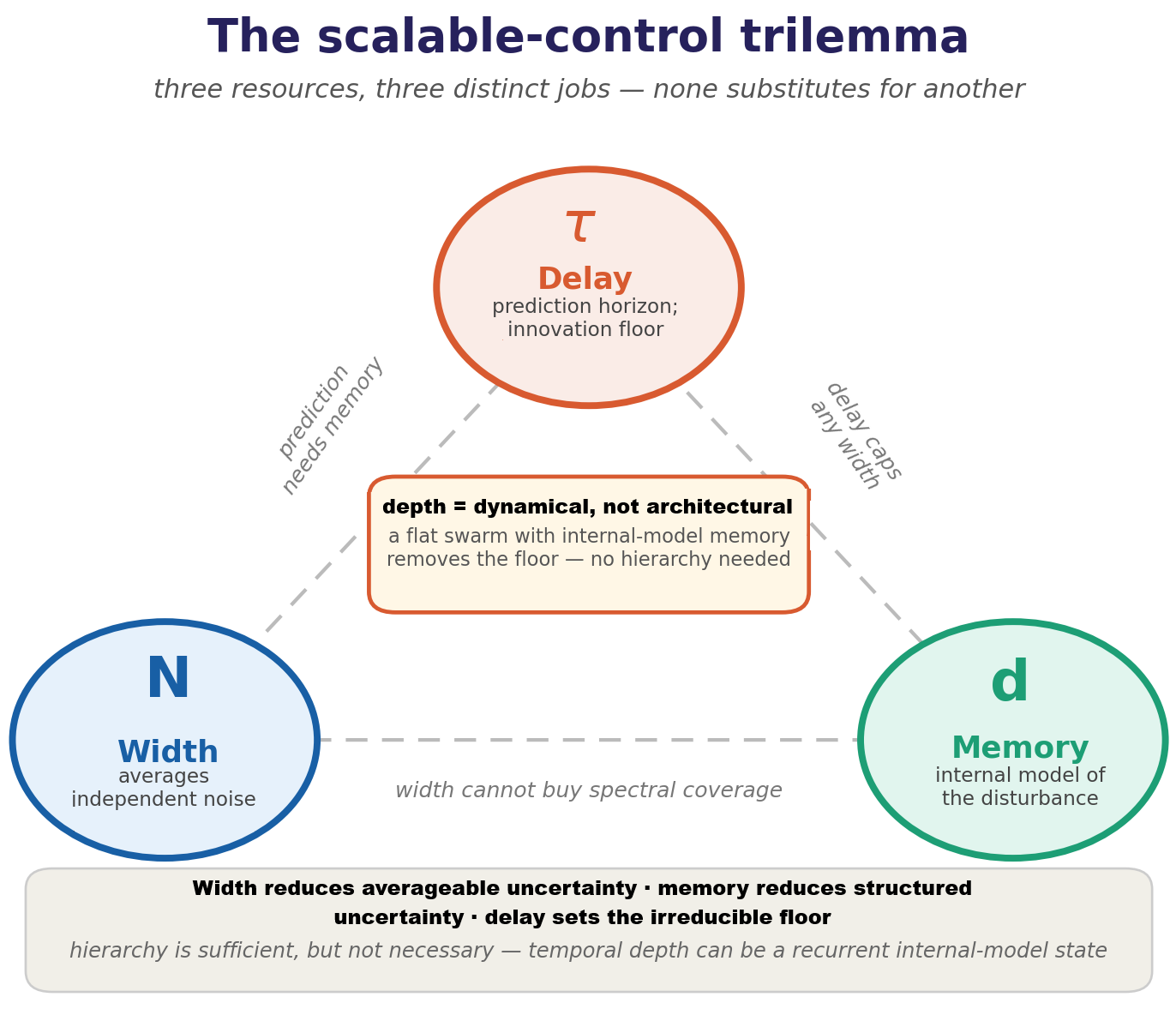}
\caption{The resource triangle. Width averages independent measurement noise;
per-agent internal-model memory covers the structured disturbance; delay sets an
irreducible innovation floor. A memoryless flat swarm cannot lower the
structured floor, but a flat swarm whose agents carry an internal model can---so
temporal depth may be dynamical (memory), not architectural (nesting).}
\label{fig:concept}
\end{figure*}

\subsection*{Contributions}
\begin{enumerate}
\item A counterexample to the strong flat-systems theorem: a flat homogeneous
swarm with matched per-agent internal models beats a designed hierarchy
(Section~\ref{sec:exp}, Figs.~\ref{fig:scaling} and~\ref{fig:map}).
\item A quantitative width$\times$memory map with exchange rates and
non-exchange boundaries, plus a fixed-budget ($N\cdot d$ const) analysis
(Figs.~\ref{fig:map} and~\ref{fig:budget}).
\item A delay theorem, stated in resource terms and verified to $<1\%$ against
the optimal predictive controller (Fig.~\ref{fig:delay}).
\item The cost of adaptation: how an oracle-free learned internal model
approaches the oracle, and where it fails (Figs.~\ref{fig:mismatch}
and~\ref{fig:nonstat}).
\item Four practitioner rules and a nonlinear-plant robustness check
(Section~\ref{sec:triangle}, Fig.~\ref{fig:nonlinear}), plus a
spatially-extended plant confirming the accounting is not an artifact of the
scalar testbed (Fig.~\ref{fig:spatial}).
\end{enumerate}

\section{Related Work}
\textbf{Minimum-variance and self-tuning control.} The floor we characterize is
the minimum-variance benchmark of {\AA}str{\"o}m \cite{astrom1970}: for a linear
plant with delay and a disturbance with rational spectrum, the minimum
achievable output variance equals the variance of the multi-step-ahead
prediction error of the disturbance, obtained by polynomial division of the
noise model. The self-tuning regulator \cite{astrom1973} identifies that model
online; Harris \cite{harris1989} turned the benchmark into a closed-loop
performance monitor still used industrially. Our adaptive baseline is in this
lineage in spirit---it identifies the disturbance online and cancels it---but is
not a full AR/RLS minimum-variance self-tuning regulator; it is an adaptive
internal-model controller specialized to narrowband disturbances (periodogram-
based frequency identification driving a bank of learned resonators,
Section~\ref{sec:setup}). We are explicit about this to avoid claiming more than
the implementation delivers.

\textbf{Internal model principle.} Francis and Wonham \cite{francis1976} showed
that asymptotic rejection of a disturbance requires the controller to embed a
model of that disturbance's generator. This is exactly the ``per-agent memory
buys spectral coverage'' axis: a resonator is the internal model of a sinusoid,
an integrator that of a constant bias, an AR model that of a colored process.

\textbf{Fundamental limitations.} Bode's sensitivity integral and its
delay-constrained descendants---the ``waterbed effect'' \cite{bode1945,seron1997}
---formalize that suppression pushed down in one band re-emerges elsewhere and
that delay caps the bandwidth over which suppression is possible. We use these to
explain the notch-depth/robustness-width trade-off (Fig.~\ref{fig:mismatch}) and
the delay wall (Fig.~\ref{fig:delay}) rather than re-deriving them. The Smith
predictor \cite{smith1957} and Kalman filter \cite{kalman1960} supply the optimal
predictive machinery for the delay analysis.

\textbf{Multi-agent scaling and emergence.} Consensus and cooperative control
establish $\propto 1/N$ averaging of independent measurement noise
\cite{olfati2007,jadbabaie2003,ren2005}. The LLM-agent literature is now testing
whether analogous scaling holds for cognitive agents, with mixed conclusions
\cite{xu2026single,qian2024scaling,tran2026single,cemri2025fail}. The target preprint
\cite{kovalov_preprint} sits at this intersection; we treat its claim as the
null hypothesis to be refined.

\textbf{Distributed estimation and graph-theoretic internal models.} We
deliberately study two extremes: fully decentralized agents with no
communication (our width axis) and a fully centralized Kalman fusion (our
frontier benchmark, Section~\ref{sec:setup}). Between them lies a large
literature this paper does not engage experimentally: distributed/consensus
Kalman filtering lets agents partially fuse measurements over a communication
graph without a fusion center \cite{olfati2005distributed}, and the internal
model principle has been extended to networked, graph-constrained settings via
cooperative output regulation, where each agent embeds a copy of the exogenous
signal generator and consensus over the graph substitutes for centralized
coordination \cite{su2012cooperative}. Limited communication could in principle
let width partially substitute for memory by letting agents share spectral
estimates; our resource accounting does not yet include this regime, and we
flag it as the most direct extension of this work (Section~\ref{sec:limits}).

\section{Problem Setup and Testbed}
\label{sec:setup}
From here through Section~\ref{sec:triangle} the setting is a physical control
system; the disturbance is a stochastic signal with known statistics, not a
model of cognition. The bridge to LLM-agent collectives---where ``memory'' is
context and retrieval and ``delay'' is reasoning latency---is a motivating
analogy, taken up only in the introduction, the discussion
(Section~\ref{sec:discussion}), and Appendix~\ref{app:explore}; it is not part
of the control-theoretic claims.

\textbf{Plant.} Our testbed is deliberately minimal: a \emph{single controlled
scalar object} observed by $N$ homogeneous agents whose control actions are
averaged and whose measurement noises are independent. Concretely, one state
$x$ evolves under a common disturbance and each agent $i$ contributes a control
$u_i$ from its own noisy, delayed observation:
\begin{equation}
x[t+1] = x[t] + \bar{u}[t] + d[t] \;(+\,\gamma \sin x[t]),\quad
\bar{u}[t] = \tfrac1N\!\sum_i u_i[t],
\end{equation}
where the optional term introduces plant nonlinearity. This is not a swarm of
$N$ independent plants but the canonical setting in which \emph{width} (more
agents) and \emph{memory} (per-agent internal state) can be varied
orthogonally; Section~\ref{sec:limits} discusses the spatially-extended
generalization. The common disturbance is
\begin{equation}
d[t] = \sum_b A_b \sin(\phi_b[t]) + p[t], \quad p[t] = a\,p[t-1] + e[t],
\end{equation}
with $\phi_b$ accumulating at frequency $\omega_b$ (permitting chirp and regime
switching) and $e[t]$ white. Each agent observes
\begin{equation}
y_i[t] = x[t-\tau] + v_i[t], \quad v_i \sim \mathcal{N}(0,\sigma_m^2),
\end{equation}
with $v_i$ \emph{independent across agents}---this is what population width can
average away; the common disturbance is what it cannot. \emph{The delay is in
the observation, not in the actuation}: $\bar{u}[t]$ enters $x[t+1]$
immediately, but each agent must choose $u_i[t]$ from information no fresher
than $x[t-\tau]$. The control problem is therefore predictive---the controller
must forecast the state and disturbance over a horizon of $\tau+1$ steps from
the delayed, noisy observation.

\textbf{Aggregate view for centralized baselines.} For the mean of $N$ agents
with i.i.d.\ measurement noise, $\bar{y}[t] = x[t-\tau] + v[t]/\sqrt{N}$, so
width enters the strong baselines \emph{only} as measurement-noise reduction---a
clean separation of the width axis from the memory axis. The Kalman frontier is
a \emph{centralized} reference built on this ensemble mean: it is the best
achievable performance \emph{if the agents' measurements could be perfectly
fused}. Decentralized agents with no communication cannot attain it---each sees
only its own noisy $y_i$---so we use it purely as an upper-bound benchmark and a
computable optimum for the resource accounting, not as a controller the swarm
could implement.

\textbf{Metric.} Steady-state MSE $=\E[x^2]$ after burn-in, averaged over seeds
with $95\%$ confidence intervals. We also fit $\mathrm{MSE}(N)=A N^{-\alpha}+C$
and report the floor $C$ with bootstrap CIs.

\textbf{Controllers.}
\begin{itemize}
\item \emph{Ladder (per-agent memory $d$)}: a homogeneous flat swarm whose
agents carry an internal model of increasing order---P ($d{=}0$), PI ($d{=}1$),
and PI+resonator banks (PIRES, $d = 1 + 2k$ for $k$ resonators). The integrator
covers bias and very-low-frequency drift; each resonator adds one narrowband
component, so memory depth is natural and monotone.
\item \emph{Cascade}: a two-loop hierarchy (the preprint's ``deep'' system).
\item \emph{Kalman frontier}: oracle-model Kalman-predictive minimum-variance
control; the computable optimum, with a semi-analytic Riccati floor.
\item \emph{Adaptive}: oracle-free adaptive internal-model control---identifies
the disturbance tones online (peak-picking on an exponentially-averaged
periodogram of a decimated, pre-filtered reconstruction) and cancels them with
phase-locked-loop learned resonators plus soft feedback. It is a \emph{heuristic engineering baseline}, not a self-tuning
minimum-variance regulator: it uses periodogram peak-picking rather than
recursive least squares and has no formal convergence guarantee. We include it
to quantify the gap between oracle frequency knowledge and practical online
learning. It also carries additional estimator memory (a spectral buffer,
oscillator bank, and filter states) beyond $2k{+}1$ states, so we report it as a
no-oracle baseline and exclude it from equal-memory comparisons.
\end{itemize}

\section{Framework: Three Resources}
We define the resources so the exchange questions are well posed.

\textbf{Width $N$.} The number of agents sharing the common disturbance and
averaging independent measurement noise; the resource that reduces the
\emph{averageable} error component.

\textbf{Per-agent memory $d$.} The dimension of each agent's internal dynamical
state, updated as $m_i[t+1] = f(m_i[t], y_i[t])$. We distinguish
\emph{internal-model} memory (state encoding a generator of the disturbance:
integrator, resonator, AR model) from generic recurrence; only internal-model
memory buys spectral coverage.

\textbf{Temporal depth / prediction.} The organization of memory across
separated timescales, used to predict the plant state and disturbance over the
delay $\tau+1$ from the delayed, noisy observation. This is what the preprint
intuits as ``depth''; we argue it is not architectural nesting but the presence,
in the loop, of a recurrent internal model whose prediction horizon spans $\tau$.
When that model is fixed we call it \emph{recurrent internal-model memory}; when
it also updates its model online and uses it to predict across the delay we call
it \emph{adaptive predictive memory}. (We avoid ``recursive'': in a control/RNN
context ``recurrent'' is accurate.)

The refined thesis, in one line: \emph{Population width reduces only averageable
uncertainty. Internal-model memory reduces structured uncertainty. Temporal
depth is not architectural hierarchy but predictive closed-loop state organized
across the relevant timescales. Delay sets the irreducible innovation floor.}

\section{Propositions}
We state the results as propositions with proof sketches; empirical verification
follows in Section~\ref{sec:exp}.

\begin{assumption}\label{as:all}
(A1) linear plant; (A2) disturbance with rational spectrum (finite sum of
narrowband tones plus an AR process); (A3) per-agent i.i.d.\ measurement noise,
independent across agents; (A4) known plant delay $\tau$. Under
(A1)--(A4) the minimum-variance machinery of \cite{astrom1970} applies to the
aggregate.
\end{assumption}

\begin{proposition}[Width--structure non-exchange]\label{prop:width}
Consider controllers of a fixed class and bounded bandwidth, and let the
disturbance contain power $P_b$ in a frequency band whose generator is not
embedded in any agent's internal model. Then the steady-state error contributed
by that band is bounded below by a positive constant independent of $N$;
increasing $N$ cannot reduce it. (Width lowers the \emph{averageable} noise
component but not the structured residual, whose \emph{asymptotic} rejection
requires an internal model of the band; a broadband robust controller may
partially attenuate the band but cannot drive its contribution to zero.)
\end{proposition}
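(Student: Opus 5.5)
The plan is to work in the frequency domain on the aggregate loop of Assumption~\ref{as:all}, separating the disturbance path from the measurement-noise path. Under (A1) and (A3), the closed-loop output of any controller in the fixed class (each agent applies the same LTI map $K(z)$, possibly with internal state) is
\begin{equation}
X(z) = S(z)\,D(z) - T(z)\,\bar V(z), \qquad \bar V = \tfrac1N\textstyle\sum_i V_i ,
\end{equation}
where $S$ is the sensitivity from the common disturbance and $T$ is the complementary map from the averaged noise. Because $d$ and the $v_i$ are independent, the steady-state variance splits as
\begin{equation}
\E[x^2] = \tfrac{1}{2\pi}\int |S(e^{j\omega})|^2\Phi_d(\omega)\,d\omega + \tfrac{\sigma_m^2}{N}\,\tfrac{1}{2\pi}\int |T(e^{j\omega})|^2\,d\omega .
\end{equation}
The key observation is that $N$ appears only in the second term, through $\sigma_m^2/N$. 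The first term depends on $N$ only through the controller, and $S$ does not depend on $N$ at all: homogeneous averaging of identical LTI agents gives $\bar u = K\bar y$. So the structured contribution is simply $\frac{1}{2\pi}\int_{\text{band}} |S|^2\Phi_d\,d\omega$.

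The next step is to show this integral is bounded away from zero uniformly over the class. I would use the internal model principle \cite{francis1976} in its frequency form: $S(e^{j\omega_b})=0$ requires a pole of the loop gain $L=PK$ at $e^{j\omega_b}$. With the plant $1/(z-1)$ and delay $z^{-\tau}$, that forces $K$ to have a pole there, i.e.\ to embed the band's generator. If no agent embeds it, then $K$, and hence $L$, is analytic and bounded near the band. Bounded bandwidth means $|L|\le \bar L$ on the band uniformly over the class, so $|S| = 1/|1+L| \ge 1/(1+\bar L) > 0$. This gives the lower bound $P_b/(1+\bar L)^2$, which is independent of $N$.

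The main obstacle is making the uniformity honest. ``Fixed class, bounded bandwidth'' must be formalized as a uniform bound on $|K|$ over the band, for example a compact family of controllers with gains and poles bounded away from $e^{j\omega_b}$. Without such a bound, a sequence of high-gain broadband controllers could drive $|S|\to 0$. At that point I would invoke the delay/Bode constraint \cite{bode1945,seron1997}, which says large loop gain near a band incompatible with the delay destroys stability. That gives a secondary argument that the admissible $\bar L$ is finite for fixed $\tau$. For narrowband tones the band integral degenerates to $P_b|S(e^{j\omega_b})|^2$, so pointwise bounds suffice. Finally, I would note that the $N$-independence relies on the controller being chosen from a fixed class rather than retuned with $N$. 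Since the bound holds for every member of the class, taking the infimum over $N$ keeps it positive.
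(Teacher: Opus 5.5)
Your route is the paper's own. Its sketch also notes that width acts only on the averaged term $v_i/\sqrt{N}$ while the common $d[t]$ survives averaging, and then invokes the internal model principle to say the sensitivity at $\omega_b$ is $\Theta(1)$ in $N$; your variance split and explicit uniformity hypothesis sharpen that.

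One step, however, fails as written. Here $d$ enters at the plant \emph{input} ($x[t+1]=x[t]+\bar u[t]+d[t]$), so with $u_i=-Ky_i$ the aggregate loop is
\[
X(z)=\frac{D(z)-K(z)\bar V(z)}{z-1+K(z)z^{-\tau}} .
\]
The $d\to x$ map is therefore $P/(1+L)$, not $S=1/(1+L)$. Your noise term is fine, since $|K/(z-1+Kz^{-\tau})|=|T|$ on the unit circle.

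The disturbance slip is not cosmetic. At $\omega_b=0$ the loop gain $L=Kz^{-\tau}/(z-1)$ inherits the plant's integrator pole, so your claim that $L$ is bounded near the band is false and $S(1)=0$. Your argument would then conclude that a memoryless P swarm rejects a constant bias $\beta$. In fact it leaves the offset $\beta/K(1)$, which is precisely the P$\to$PI step of the paper's ladder. For input disturbances the plant's poles do not count; only poles of $K$ do.

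Away from zero, your constant $P_b/(1+\bar L)^2$ is also missing the factor $|e^{j\omega_b}-1|^{-2}$. That factor is below $1$ for $\omega_b>\pi/3$, so your stated bound can exceed the true contribution there.

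The repair is to bound the controller rather than the loop. If $|K(e^{j\omega})|\le\bar K$ on the band, then $|e^{j\omega}-1+K(e^{j\omega})e^{-j\omega\tau}|\le 2+\bar K$. This gives the $N$-independent bound $P_b/(2+\bar K)^2$, valid at every band frequency including $\omega_b=0$.

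Your fallback for uniformity is also incorrect: delay and Bode's integral do not cap the loop gain at a single frequency. Bode's integral only redistributes $\int\log|S|$, and a narrow deep notch costs little area. So a stabilizing controller can have arbitrarily large $|K(e^{j\omega_b})|$ via a resonator with poles at $\rho e^{\pm j\omega_b}$, $\rho\to 1$; the sharp $\rho=0.9999$ resonator of Block~B is such a stable loop. It has no pole on the unit circle, so it does not embed the generator in the strict sense, yet it drives the band's contribution toward zero.

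The uniform bound must therefore come from the class definition itself: bounded parameters and poles bounded away from $e^{\pm j\omega_b}$, as in your primary formalization. A stability argument recovers it only for fixed-structure families such as P or PI, where stability under delay bounds the gains. For a single controller frozen across $N$, as in the paper's experiments, no uniformity is needed at all. The band term is then exactly constant in $N$, and it is positive if and only if $K$ has no pole at $e^{j\omega_b}$.
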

\begin{IEEEproof}[Proof sketch]
Width acts only on the per-agent i.i.d.\ term $v_i/\sqrt{N}$ in the aggregate;
the common disturbance $d[t]$ is identical across agents and survives averaging
unchanged. Rejecting power in band $b$ requires loop gain shaped at $\omega_b$,
which by the internal model principle \cite{francis1976} requires an internal
model of that band in the loop. Absent such a model, the closed-loop sensitivity
at $\omega_b$ is $\Theta(1)$ in $N$, so the band's contribution to $\E[x^2]$ is
bounded below by a constant. Verified in Fig.~\ref{fig:bands} and by the
L-shaped contours of Fig.~\ref{fig:map}.
\end{IEEEproof}

\begin{proposition}[Memory--delay partial exchange; the delay floor]\label{prop:delay}
Under Assumption~\ref{as:all}, the minimum achievable steady-state variance
equals the conditional variance of the controlled next state given the delayed,
noisy observation filtration:
\begin{equation}
C_{\min}(\tau) = \Var\!\big(x[t+1] - \hat{x}[t+1 \mid \mathcal{F}_t^y]\big),
\end{equation}
$\mathcal{F}_t^y = \sigma(y[s], u[s{-}1] : s \le t)$, $y[s]=x[s-\tau]+v[s]$. This
has two contributions: (a) the innovation of the disturbance accumulated over
the horizon $\tau+1$, which no controller can anticipate, and (b) the residual
state-estimation uncertainty from observing only the delayed, noise-corrupted
$x[t-\tau]$. Under perfect state measurement it reduces to the classical
minimum-variance floor---the variance of the $(\tau{+}1)$-step disturbance
prediction error \cite{astrom1970}---but with noisy delayed sensing the
estimation term must be included.
\end{proposition}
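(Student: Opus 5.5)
The plan is to treat the aggregate system as a linear-Gaussian stochastic control problem and use the orthogonality (projection) property of the conditional mean. Under (A1)--(A3) the ensemble mean $\bar y[t]=x[t-\tau]+v[t]/\sqrt N$ is a sufficient statistic, so I work with $\bar y$. Augment the state with the disturbance generator: tone oscillator states (two per band, which are marginally stable and deterministic given initial phase) and the AR state $p$. This gives a finite-dimensional Gauss--Markov model $z[t+1]=Fz[t]+Gu[t]+We[t]$ with $x=Hz$. The $\tau$-step observation delay is absorbed by appending a shift register of past $x$ values, so $y[t]=x[t-\tau]+v[t]$ becomes an undelayed measurement of the augmented state.

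\textbf{Lower bound.} Any admissible $u[t]$ is $\mathcal F_t^y$-measurable, and $x[t+1]=x[t]+u[t]+d[t]$. Write $x[t+1]=\hat x[t+1\mid\mathcal F_t^y]+\varepsilon[t+1]$, where
\[
\hat x[t+1\mid\mathcal F_t^y]=\E\big[x[t]+d[t]\mid\mathcal F_t^y\big]+u[t].
\]
Then $\varepsilon$ is orthogonal to every $\mathcal F_t^y$-measurable variable, and its law does not depend on the policy. The policy-independence follows from the separation/certainty-equivalence property of linear-Gaussian systems: the Kalman error covariance does not depend on the controls. Hence $\E[x[t+1]^2]=\E[\hat x^2]+\Var(\varepsilon)\ge\Var(\varepsilon)$, which proves that $C_{\min}(\tau)$ as stated is a lower bound.

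\textbf{Achievability.} Choose $u[t]=-\E[x[t]+d[t]\mid\mathcal F_t^y]$, computed by the steady-state Kalman predictor run $\tau+1$ steps ahead. This makes $\hat x=0$, so equality holds. Taking $t\to\infty$, the error covariance converges to the stabilizing Riccati solution, which gives the steady-state value.

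\textbf{Decomposition into (a) and (b).} Write $x[t+1]=x[t-\tau]+\sum_{s=t-\tau}^{t}(u[s]+d[s])$. The past controls are known, so
\[
\varepsilon=\big(x[t-\tau]-\E[x[t-\tau]\mid\mathcal F^y_t]\big)+\sum_s\big(d[s]-\E[d[s]\mid\mathcal F^y_t]\big).
\]
Split the disturbance part further: the part predictable from the true past disturbance $d[\le t-\tau-1]$ versus the innovation $\sum_s\sum_{j}\psi_j e[\cdot]$ over the horizon. The innovation is the $(\tau+1)$-step prediction error obtained from the Diophantine division of the AR model, as in \cite{astrom1970}. Because these future innovations are independent of $\mathcal F_t^y$, the innovation term is orthogonal to the rest. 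That yields term (a) plus a nonnegative term (b), which collects the estimation error in the delayed state and the disturbance state.

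\textbf{Classical limit.} If $\sigma_m=0$, the past $x$ and $u$ reveal $d[\le t-\tau-1]$ exactly, so (b) vanishes and the floor reduces to the minimum-variance result. The tones contribute nothing to (a), since they are deterministic after identification. Strictly, this holds asymptotically: with marginally stable modes the Kalman estimate of the phase converges, so the steady-state contribution of the tones is zero.

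The main obstacle I expect is rigor about the steady state and the marginally stable tone modes. The Riccati equation has poles on the unit circle, so I must check detectability (observability of the tones through $x$ is fine) and accept non-stabilizability of the exogenous modes. That non-stabilizability is harmless because the tones enter only as disturbances. I would handle it by citing standard results for the Riccati equation with unit-circle modes that are observable and noise-free, which give convergence with zero limiting tone covariance.
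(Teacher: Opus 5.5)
Your proposal is correct and follows the paper's own route: cancel $\E[x[t]+d[t]\mid\mathcal{F}_t^y]$, isolate the unanticipable innovation over the horizon by the $(\tau{+}1)$-step division, and add the Kalman estimation error. It also makes explicit two things the sketch leaves implicit: that the estimation error is independent of the policy, and that the two contributions are orthogonal (by splitting at $x[t-\tau]$ and using independence of $e[t-\tau],\dots,e[t]$ from $\mathcal{F}_t^y$). One small fix: for achievability you need the time-varying Kalman filter, not the steady-state one. The limiting Riccati solution has zero covariance on the noise-free tone modes, so the steady-state gain never corrects an unknown initial phase, and that limit is the strong Riccati solution, not a stabilizing one.
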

\begin{IEEEproof}[Proof sketch]
Because control acts through the plant without its own delay, $x[t+1]$ is fixed
by $x[t]$, the applied $u[t]$, and $d[t]$; the best mean-square $u[t]$ cancels
$\E[x[t]+d[t]\mid\mathcal{F}_t^y]$, leaving $\Var(x[t+1]\mid\mathcal{F}_t^y)$.
Writing $d=(C/A)e$ with $e$ white and applying the $(\tau{+}1)$-step division
$C/A = F + z^{-(\tau+1)}G/A$ isolates the unpredictable innovation $Fe$; the
delayed noisy observation adds the Kalman estimation covariance of the current
state, which the Riccati recursion of the optimal filter \cite{kalman1960}
yields in closed form. The sum is $C_{\min}(\tau)$; it grows with $\tau$, with
the disturbance memory (larger $f_k$ for more strongly colored spectra), and
with the sensing noise $\sigma_m^2/N$. Verified to $<1\%$ in
Fig.~\ref{fig:delay}.
\end{IEEEproof}

\begin{corollary}[Delay wall]
No amount of width or memory removes $\Var(Fe)$; it is the environment's
unpredictability over the latency horizon.
\end{corollary}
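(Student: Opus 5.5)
The claim to prove is that no admissible control law, however much width $N$ or per-agent memory $d$ it uses, can push the steady-state variance below $\Var(Fe)$. The plan is to lower-bound $\E[x[t+1]^2]$ for an arbitrary admissible law by the pure innovation term of Proposition~\ref{prop:delay}. Throughout, admissible means $u[t]$ is measurable with respect to the information available at time $t$.

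\emph{Step 1: enlarge the information set.} A decentralized swarm with memory uses $\sigma(y_i[s], m_i[s] : i\le N, s\le t)$, which is generated by delayed noisy observations and the agents' own past actions. This set is contained in the \emph{oracle} filtration $\mathcal{G}_t=\sigma(x[s], d[s-1] : s\le t-\tau)$ augmented with the measurement noises and all past controls. The noises $v_i$ are independent of the disturbance innovations $e$ by (A3). We therefore replace the admissible class by laws measurable with respect to $\mathcal{G}_t\vee\sigma(v)$. This step is where width ($N\to\infty$) and memory ($d\to\infty$) are absorbed: neither can produce information beyond $\mathcal{G}_t$.

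\emph{Step 2: decompose the next state.} Iterating the linear plant (A1) $\tau+1$ steps gives
\[
x[t+1]=x[t-\tau]+\sum_{s=t-\tau}^{t}\bar u[s]+\sum_{s=t-\tau}^{t} d[s].
\]
Write $d=(C/A)e$ and use the Diophantine division $C/A=F+z^{-(\tau+1)}G/A$, with $\deg F=\tau$. This splits the disturbance sum into two parts:
\begin{itemize}
\item a component that is $\mathcal{G}_t$-measurable, coming from $G/A$ acting on innovations up to time $t-\tau-1$, together with the known past controls;
\item the component $\sum_{s}(Fe)[s]$, built from $e[t-\tau],\dots,e[t]$.
\end{itemize}
The second component is independent of $\mathcal{G}_t\vee\sigma(v)$ because $e$ is white. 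Hence, for any admissible $u$,
\[
\E[x[t+1]^2]=\E[(\text{measurable part})^2]+\Var(\text{innovation part})\ge \Var(Fe),
\]
since the cross term vanishes by independence. The right-hand side depends on neither $N$ nor $d$.

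\emph{Step 3: tightness.} Equality in Step 2 holds only under the perfect-measurement oracle, which is the classical \AA{}str\"om law; any finite-$N$ sensing adds the nonnegative Kalman term of Proposition~\ref{prop:delay}(b). That term tends to zero as $\sigma_m^2/N\to0$, so width can remove the estimation component but never the innovation component. This shows $\Var(Fe)$ is exactly the wall, not merely a lower bound.

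\emph{Main obstacle.} The delicate point is Step 2's bookkeeping for the tonal part of $d$. Pure sinusoids have a singular spectrum with zero innovation, so they contribute nothing to $\Var(Fe)$; this is consistent with the claim. The work is to check that $F$ correctly indexes a $(\tau+1)$-step horizon, given that delay enters observation and not actuation. I would first verify this on the AR(1) component, where
\[
F=\sum_{j=0}^{\tau}a^jz^{-j},\qquad \Var(Fe)=\sigma_e^2\,\frac{1-a^{2(\tau+1)}}{1-a^2},
\]
and then extend to the general rational $C/A$ by linearity.
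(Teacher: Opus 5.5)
Your overall structure matches how the corollary follows from Proposition~\ref{prop:delay}. $C_{\min}(\tau)$ is already an infimum over all causal laws, so no choice of memory $d$ can beat it. It splits into an innovation term plus a nonnegative estimation term, and $N$ enters only the latter, through $\sigma_m^2/N$. Your Step~1, which embeds every decentralized law into laws measurable with respect to the noise-free delayed oracle filtration augmented by the independent $v$, makes the ``any $N$, any $d$'' part more explicit than the paper does, and it is fine.

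The gap is in Step~2. You apply the $(\tau{+}1)$-step division to $C/A$, the model of $d$ itself. But the plant integrates, so $x[t+1]$ contains the \emph{sum} $d[t-\tau]+\dots+d[t]$. Summing the per-sample splits does not give a measurable/independent decomposition. For $s<t$, $(Fe)[s]=\sum_{j=0}^{\tau}f_j e[s-j]$ contains $e[s-\tau],\dots,e[t-\tau-1]$, which are $\mathcal G_t$-measurable. So $\sum_s (Fe)[s]$ is not built from $e[t-\tau],\dots,e[t]$ alone, the cross term does not vanish, and its variance is not $\Var(Fe)$ anyway.

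\textbf{A concrete failure.} Take the AR(1) part with $\tau=1$. Given $\mathcal G_t$, the unknown quantity is $p[t-1]+p[t]=(a+a^2)p[t-2]+(1+a)e[t-1]+e[t]$. The noise-free delayed oracle, which is admissible in your enlarged class, therefore attains exactly $\sigma_e^2\big(1+(1+a)^2\big)$, whereas your formula gives $\sigma_e^2(1+a^2)$.
\begin{itemize}
\item For $a>-1/2$ your value lies strictly below the attained minimum (already at $a=0$: $1$ versus $2$), so the equality claimed in Step~3 fails.
\item For $a<-1/2$ it lies strictly above the attained minimum, so the inequality $\E[x[t+1]^2]\ge\Var(Fe)$ asserted in Step~2 is false.
\end{itemize}

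\textbf{The fix.} Divide the model of the disturbance as it appears in $x[t+1]$, namely $C/\big(A(1-z^{-1})\big)$; the integrator pole belongs in {\AA}str{\"o}m's $A$. The unpredictable part is then $\sum_{r=t-\tau}^{t}c_{t-r}\,e[r]$ with $c_k=\sum_{j=0}^{k}h_j$, where $h_j$ is the impulse response of $C/A$. For AR(1) this gives $c_k=(1-a^{k+1})/(1-a)$ and floor $\sigma_e^2\sum_{k=0}^{\tau}c_k^2$. This is the ``innovation of the disturbance accumulated over the horizon'' of Proposition~\ref{prop:delay}(a). The paper's sketch writes the division as $C/A=F+z^{-(\tau+1)}G/A$ tersely; only the accumulated reading is correct. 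With this $F$, the unpredictable part involves only $e[t-\tau],\dots,e[t]$, is independent of $\mathcal G_t\vee\sigma(v)$, and your Steps~1--3 then go through essentially verbatim.
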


\begin{proposition}[Depth without hierarchy]\label{prop:depth}
There exists a flat, homogeneous swarm (identical agents, no nested loops) whose
per-agent internal model matches the disturbance and whose steady-state floor is
at or below that of a designed two-loop hierarchy with the same per-agent
memory. Hence architectural hierarchy is sufficient but not necessary to lower
the floor.
\end{proposition}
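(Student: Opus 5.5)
This is an existence claim, so the plan is to exhibit a witness. The witness is the flat ladder swarm of Section~\ref{sec:setup}, with each agent running the PIRES controller of memory $d=1+2k$. Its $k$ resonators are tuned to the $k$ tone frequencies $\omega_b$ of $d[t]$, and the integrator handles the low-frequency part of the AR component $p[t]$. The comparison is against any two-loop cascade whose per-agent state dimension is also $d$. Because every agent runs the same linear time-invariant controller $K(z)$ on its own $y_i$, the averaged control is $\bar u = K(z)\bar y$. Here $\bar y[t]=x[t-\tau]+\bar v[t]$ and $\Var(\bar v)=\sigma_m^2/N$. The closed loop is therefore exactly a single-loop system driven by $d$ and by noise of variance $\sigma_m^2/N$, so the flat swarm can be analysed through one sensitivity function $S(z)=1/(1-z^{-1}-z^{-(\tau+1)}K(z)z^{-1})$, adapted to the sign convention of the plant. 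The same reduction applies to the cascade, whose effective controller is some $K_c(z)$ built from an inner and outer loop.

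The key step is the observation that a two-loop cascade on this scalar plant is also an LTI map from $\bar y$ to $\bar u$. Its transfer function $K_c(z)$ has McMillan degree at most the total number of states in the two loops, which is $d$ under the equal-memory constraint. Hence the set of cascades is contained in the set of flat single-loop controllers of order $\le d$, and the flat class can realise every cascade's input--output behaviour without nesting. The steady-state floor is $\E[x^2]=\frac{1}{2\pi}\int |S|^2\Phi_d\,d\omega + \frac{\sigma_m^2}{N}\frac{1}{2\pi}\int |T|^2 d\omega$, which depends only on $K$. Therefore
\begin{equation}
\inf_{\text{flat},\,\deg K\le d}\E[x^2]\;\le\;\inf_{\text{cascade},\,\text{memory } d}\E[x^2],
\end{equation}
and the infimum on the left is attained on a compact set of stabilizing coefficients, after restricting to a bounded stability margin. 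The internal model principle \cite{francis1976} selects a natural near-optimal representative: placing the poles of $K$ at $e^{\pm j\omega_b}$ and at $1$ forces $S(e^{j\omega_b})=0$. This removes all tone power for every $N$, leaving only the AR innovation, the noise term, and the delay floor of Proposition~\ref{prop:delay}. That representative is the witness. The concluding sentence of the statement then follows: hierarchy is sufficient because cascades lower the floor relative to P control, and it is not necessary because the flat realisation does at least as well.

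The main obstacle is making ``same per-agent memory'' precise for the hierarchy. If the outer loop's state is counted once, centrally, rather than per agent, the cascade could hold more total state than $d$ per agent, and the degree bound fails. I would first fix the convention that each agent replicates both loops, so its own state has dimension $d$; this matches the homogeneous comparison. I would then check that the state-space realisation of the composed loops is at most $d$-dimensional, since nesting adds no hidden states beyond the two controllers' own. A secondary subtlety is stability: the flat realisation of $K_c$ must stabilise the same loop. This holds automatically because the closed loop is identical as an LTI system, with the same characteristic polynomial. Finally, I would support the nonstrict ``at or below'' with the exact Riccati/spectral computation used for Fig.~\ref{fig:delay}, evaluating the matched PIRES swarm against the designed cascade.
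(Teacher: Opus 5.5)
\textbf{The gap: the inequality of infima does not reach the witness you name.}

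The inequality $\inf_{\text{flat}}\le\inf_{\text{cascade}}$ is fine, but it holds trivially. The flat implementation of the cascade's own $K_c(z)$ attains equality, and that controller generally contains no model of the tones. The matched PIRES is a different point of the order-$d$ class, and nothing in your argument compares its floor with any cascade's.

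Placing poles at $e^{\pm j\omega_b}$ and $1$ guarantees $S(e^{j\omega_b})=S(1)=0$, and nothing more. The AR component still passes through $|S|^2$ (and, at finite $N$, the averaged noise through $|T|^2$) with whatever numerator gains you choose. With a $(\tau{+}1)$-step loop delay, the notch also pushes $|S|$ up elsewhere.

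In particular, a fixed-gain PIRES does not attain $C_{\min}(\tau)$ of Proposition~\ref{prop:delay}. That floor requires the Kalman-predictive controller; if the matched PIRES reached it, the comparison with every controller would be automatic.

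Nor is the matched controller ``near-optimal'' in any proven sense. Generically it is not even a stationary point of the order-$d$ cost. The tone's contribution $\tfrac12 A_b^2|S(e^{j\omega_b})|^2$ grows only quadratically as the resonator pole moves off the unit circle, while the remaining terms generically change at first order. So a slightly damped resonator does better.

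Your claim to beat \emph{any} memory-$d$ cascade is therefore unsupported. A cascade tuned close to the order-$d$ optimum is not excluded, which is exactly why the paper explicitly declines the universal version. (Also, restricting only the flat class to a bounded stability margin to get attainment would spoil the inequality unless the cascades are restricted identically.)

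\textbf{What closes it, and what your realisation argument does buy.}

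The statement asks for only one designed hierarchy, and the comparison with it is inherently quantitative. The paper's proof does exactly this. It fixes the two-loop cascade, with gains chosen by a 48-point grid search and frozen across $N$. It then exhibits a flat resonator swarm matched to the band and shows that swarm has the lower floor ($0.058$ vs.\ $0.107$, Fig.~\ref{fig:scaling}).

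Your closing sentence is this step, and it is the one that actually proves the proposition. Because both closed loops are LTI, you could even make it exact, via a closed-loop Lyapunov equation or your spectral integral. Note that the Riccati recursion behind Fig.~\ref{fig:delay} belongs to the optimal controller, not to a fixed PIRES or cascade.

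Keep the realisation argument for what it genuinely gives, which the paper does not state. For linear controllers, every cascade has a flat realisation with the same per-agent state count and an identical floor, so nesting per se is never necessary. That is a more general form of the ``not necessary'' clause than a single counterexample. But it does not by itself produce a flat swarm \emph{with a matched internal model} that sits at or below a hierarchy.
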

\begin{IEEEproof}[Proof sketch]
Constructive. Give each agent a PIRES internal model tuned to the disturbance
bands (memory $d = 1 + 2\cdot\#\text{bands}$). With the bands cancelled, the flat
swarm's floor is the delay/innovation floor of Proposition~\ref{prop:delay}. It
suffices to exhibit one flat controller that beats one designed hierarchy: our
PIRES swarm attains a lower floor than the implemented two-loop cascade at equal
per-agent memory (Fig.~\ref{fig:scaling}), so architectural hierarchy is not
necessary in general. (We do not claim the flat controller beats \emph{every}
hierarchy---a better-tuned or differently-organized cascade might match it; the
impossibility claim we refute is universal, so a single counterexample is
enough.) Verified in Fig.~\ref{fig:scaling} and the $d{=}7$ row of
Fig.~\ref{fig:map}.
\end{IEEEproof}

\begin{proposition}[Cost of adaptation, empirical]\label{prop:adapt}
The heuristic no-oracle controller of Section~\ref{sec:setup} approaches a tuned
internal-model (oracle-frequency) baseline in stationary narrowband settings,
with a gap that is controlled empirically by the online identification error and
shrinks as the per-population SNR (hence $N$) grows. Below an identification
threshold $N_{\mathrm{id}}$ its floor is uncontrolled, and under regime switches
faster than the identification time it is dominated by re-identification
transients and may exceed that of a robust memoryless loop. We state this as an
empirical observation, not a convergence theorem: our controller is an
engineering baseline, not a self-tuning minimum-variance regulator, and a formal
convergence analysis is beyond this paper's scope. A tighter regulator (e.g.\
RLS) would likely narrow the stationary gap; the transient failure under fast
switching, however, is set by the re-identification time and is therefore
fundamental to any learning scheme, not an artifact of our estimator.
\end{proposition}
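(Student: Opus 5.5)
The proposition is explicitly empirical, so the plan is to isolate its three claims and support each with a mechanism plus a targeted experiment. Where possible I will tie them to Propositions~\ref{prop:width} and~\ref{prop:delay}. The claims are: (i) a stationary gap to the oracle-frequency PIRES baseline that shrinks with $N$; (ii) a threshold $N_{\mathrm{id}}$ below which the floor is uncontrolled; (iii) a failure under regime switching faster than the identification time.

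For (i), I would decompose the adaptive controller's steady-state error into three parts: the oracle PIRES floor, which by Proposition~\ref{prop:delay} is the innovation-plus-estimation floor once the bands are cancelled, a cross term we can neglect, and a mismatch term. The mismatch term comes from frequency error $\delta\omega_b=\hat\omega_b-\omega_b$. A resonator tuned at $\hat\omega_b$ leaves residual sensitivity $|S(e^{j\omega_b})|^2\approx c_b\,\delta\omega_b^2$ near a deep notch. The band therefore contributes about $P_b c_b\,\E[\delta\omega_b^2]$.

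The periodogram estimate is fed by the ensemble reconstruction, whose noise variance is $\sigma_m^2/N$. Standard peak-picking accuracy (Cramér--Rao-type scaling) gives $\E[\delta\omega_b^2]\propto (\sigma_m^2/N)/(A_b^2 T_{\mathrm{eff}}^3)$, where $T_{\mathrm{eff}}$ is the averaging window. This gives a gap that decreases in $N$. I would verify it by plotting the gap against $N$ together with the measured $\delta\omega_b$.

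For (ii), I would use the threshold effect of periodogram peak-picking. When the per-bin SNR $A_b^2 T_{\mathrm{eff}} N/\sigma_m^2$ falls below a constant, the largest peak is an outlier with non-negligible probability. A wrongly placed resonator then fails to cancel the tone. It may also inject energy, a waterbed effect \cite{bode1945,seron1997}, so Proposition~\ref{prop:width} leaves the band's power unrejected regardless of $N$. Solving the SNR condition for $N$ defines $N_{\mathrm{id}}$. I would check it empirically as the knee in the floor-versus-$N$ curve with bootstrap CIs.

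For (iii), suppose the regime dwell time is $T_s$ and the identification time is $T_{\mathrm{id}}$ (the window $T_{\mathrm{eff}}$ plus PLL lock time). Then a fraction of roughly $\min(1,T_{\mathrm{id}}/T_s)$ of the time is spent mismatched. The time-averaged MSE is bounded below by that fraction times the unrejected-plus-injected band power. Comparing this with the robust memoryless loop's $N$-independent broadband attenuation shows the crossover at $T_s\lesssim T_{\mathrm{id}}$.

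The claim that this transient is fundamental to any learner I would argue information-theoretically. After a switch, no estimator can localize the new frequency to within the notch width until enough post-switch samples arrive. Until then it is mismatched, whatever the estimator; RLS changes constants, not this bound.

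The main obstacle is turning (i) and (ii) into quantitative statements rather than scaling heuristics. The closed loop couples identification and control, so the residual fed to the periodogram depends on the current estimate. I would first try to treat the tones as slowly varying and the loop as quasi-static, which justifies the open-loop accuracy formulas. I would then rely on the experiments to pin the constants, consistent with the proposition's explicitly empirical status.
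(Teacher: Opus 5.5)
Your skeleton matches the paper's justification on the threshold and switching clauses. The paper also argues that identification requires the tones to rise above a reconstruction noise scaling as $\sigma_m^2/N$, below which the periodogram peaks are buried. It argues that under switching faster than the estimator's memory the estimate never settles. It then points to the adaptive row of Fig.~\ref{fig:map} and to Figs.~\ref{fig:mismatch}--\ref{fig:nonstat}.

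The two routes differ on the stationary gap and on the ``fundamental'' clause. The paper does not model the gap. It anchors the gap, and the remark that RLS would narrow it, to the classical self-tuning property of the AR/RLS regulator, whose attained limit is the minimum-variance floor \cite{astrom1973}. It concedes that its resonator bank shares this behaviour only qualitatively. You omit that anchor, which leaves the RLS clause unsupported. In its place you build three things:
\begin{enumerate}
\item an explicit gap model, with notch sensitivity $\approx c_b\,\delta\omega_b^2$ times a Cram\'er--Rao-type error $\propto(\sigma_m^2/N)/(A_b^2T_{\mathrm{eff}}^3)$;
\item a quantitative $N_{\mathrm{id}}$ from the peak-picking threshold effect;
\item an information-theoretic argument for learner-independence, which the paper simply asserts.
\end{enumerate}
Your version buys falsifiable predictions, a gap $\propto 1/N$ and a located knee, that the paper's figures do not test. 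The paper's version buys a rigorous reference point for what a better estimator could reach.

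Several of your heuristics need hedging.

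\emph{The oracle floor.} Proposition~\ref{prop:delay} only lower-bounds the oracle PIRES floor; a fixed-gain PI-plus-resonator loop is not the minimum-variance predictor. This does not break your decomposition, but drop the equality.

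\emph{The gap model.} The adaptive controller is not a PIRES loop with a shifted $\hat\omega_b$. It uses PLL-learned resonators, soft feedback and a decimated reconstruction. So at exact frequency it may carry an $N$-independent offset that your model would misattribute to $\delta\omega_b$. After lock, its residual is set by PLL phase and amplitude jitter rather than by the periodogram's $T_{\mathrm{eff}}^3$ law; only the $\sigma_m^2/N$ scaling is robust. Your planned plot of the gap against measured $\delta\omega_b$ is the right check.

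\emph{The crossover in (iii).} Write $F_{\mathrm{ad}}$ and $F_{\mathrm{PI}}$ for the locked-adaptive and PI floors, and $P_{\mathrm{mis}}$ for the mismatched level. The adaptive loop loses when $T_s\lesssim T_{\mathrm{id}}(P_{\mathrm{mis}}-F_{\mathrm{ad}})/(F_{\mathrm{PI}}-F_{\mathrm{ad}})$, not simply when $T_s\lesssim T_{\mathrm{id}}$. It loses at all only if $P_{\mathrm{mis}}>F_{\mathrm{PI}}$. That condition is exactly where the injected waterbed energy of a sharp misplaced notch is essential; compare $0.64$ at $30\%$ detune versus PI's $0.49$ in Block B.

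\emph{The fundamentality argument.} Your information-theoretic argument shows that every learner has a nonzero re-identification time, so none can track the oracle through a switch. It does not show that every learner must fall below the robust loop. A learner that widens its notch while uncertain, as in the paper's fourth design rule, can hedge. So the argument supports the fundamentality clause only in that weaker sense.
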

\begin{IEEEproof}[Justification]
For the classical AR/RLS self-tuning regulator the attained limit is the
minimum-variance floor \cite{astrom1973}; our resonator-bank controller only
shares the same qualitative behavior for narrowband disturbances, without a
convergence guarantee. Identifiability requires the disturbance to rise above the
reconstruction noise, which scales as $\sigma_m^2/N$; below that the periodogram
peaks are buried. Under switching faster than the estimator's memory the estimate
never settles. Observed in the adaptive row of Fig.~\ref{fig:map} and in
Figs.~\ref{fig:mismatch}--\ref{fig:nonstat}.
\end{IEEEproof}

\section{Experiments}
\label{sec:exp}
Phase~1 (E1--E5) establishes the dichotomy and the counterexample; Phase~2
(Blocks A--E) turns it into the quantitative accounting and adds the optimal
frontier and the oracle-free adaptive baseline. Two exploratory studies (E6,
E7) are deferred to Appendix~\ref{app:explore}.

\textbf{E1--E3 --- counterexample (Fig.~\ref{fig:scaling}).} Floors ($95\%$ CI)
on a single-band task: P $1.40$, PI $0.32$, PID $0.12$, CASCADE $0.107$, flat
RES $0.058$. The flat resonator swarm ($d{=}2$) beats the two-loop cascade at
equal per-agent memory. To make the comparison fair to the hierarchy, the
cascade's loop gains were selected by a grid search (48 combinations over its
proportional, slow-loop-bandwidth, and slow-loop-integral gains) minimizing MSE
on the tuning task; the best-tuned gains are then frozen and reused across all
$N$, exactly as for every other controller. The flat swarm still attains the
lower floor. All scaling exponents $\alpha \approx 1.0$
(pure $1/N$ noise reduction to a fixed floor). \emph{Supports
Prop.~\ref{prop:depth}.}

\begin{figure}[!t]
\centering
\includegraphics[width=\columnwidth]{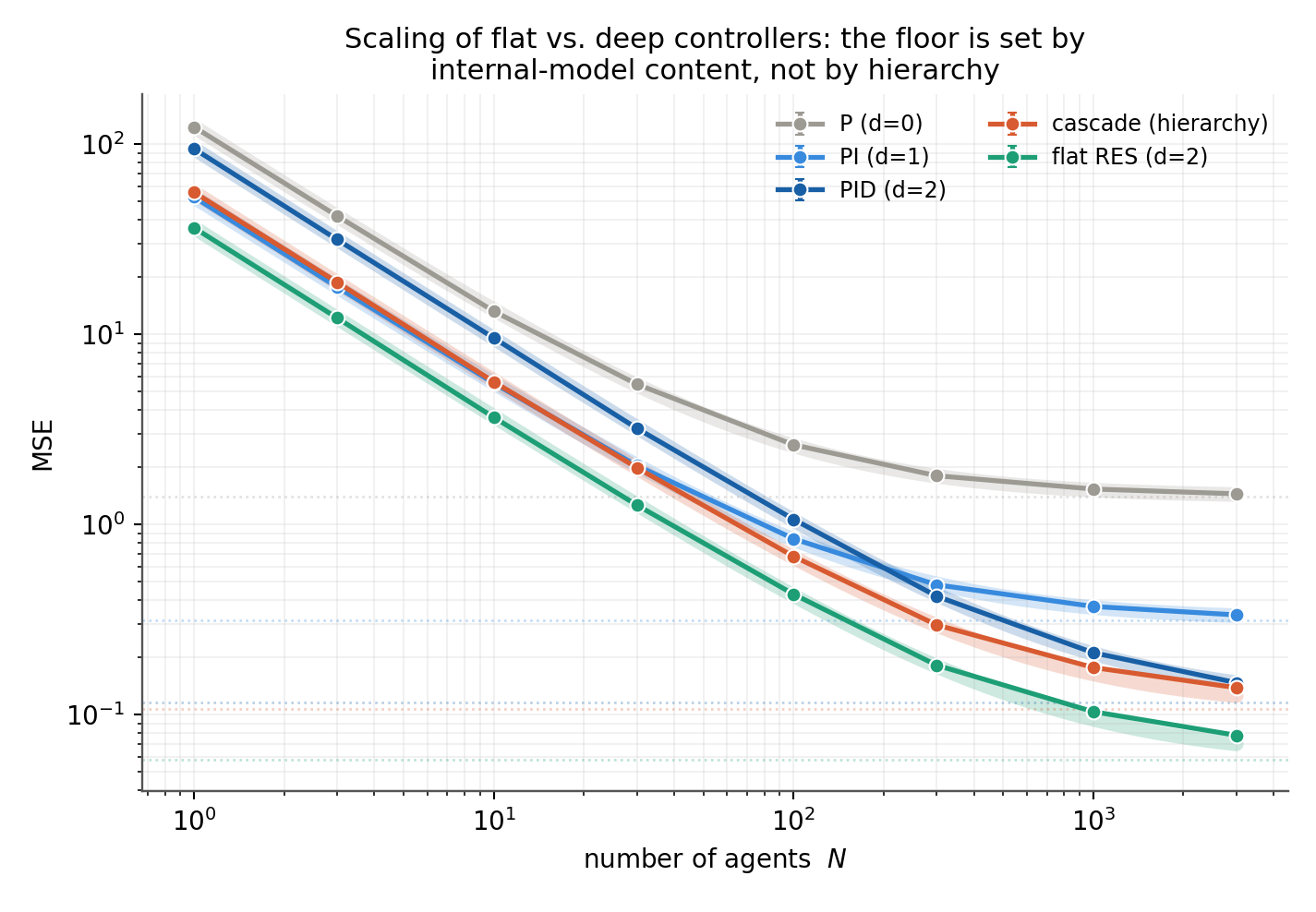}
\caption{E1--E3: MSE$(N)$ with fitted floors. A flat resonator swarm ($d{=}2$)
attains a lower floor than the two-loop cascade at equal per-agent memory,
refuting the strong flat-systems claim by counterexample.}
\label{fig:scaling}
\end{figure}

\textbf{E4 --- width cannot buy coverage (Fig.~\ref{fig:bands}).} On a two-band
task, an internal model covering one band leaves the other's power
$N$-independent (MSE at $N{=}1000$: $2.96$ uncovered vs.\ $0.25$ both covered).
\emph{Supports Prop.~\ref{prop:width}.}

\begin{figure*}[!t]
\centering
\includegraphics[width=0.98\textwidth]{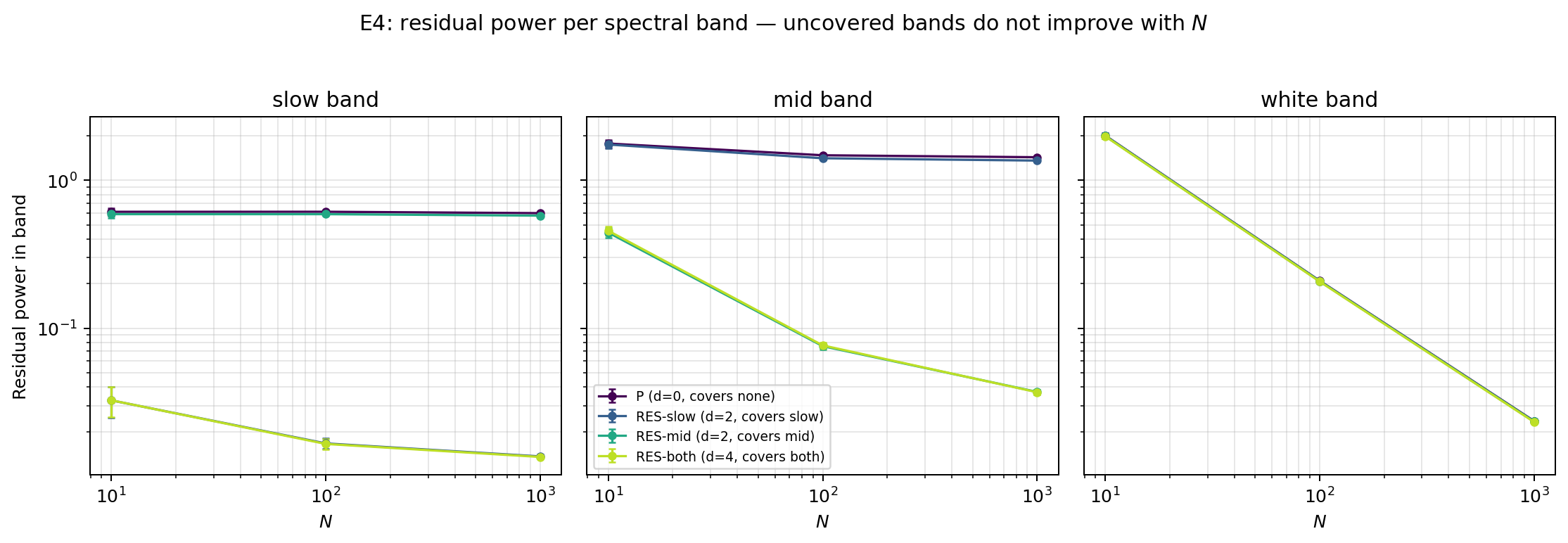}
\caption{E4: an internal model covering one band drives its power down with $N$;
the uncovered band's power is $N$-independent. Width cannot buy spectral
coverage.}
\label{fig:bands}
\end{figure*}

\textbf{E5 --- latency waterbed.} PI MSE rises $0.61\!\to\!3.29$ as
$\tau\!:\!1\!\to\!10$ and destabilizes by $\tau{=}20$; the PSD shows the
waterbed. \emph{Motivates Prop.~\ref{prop:delay}.}

\textbf{Block A --- the width$\times$memory map
(Figs.~\ref{fig:map},~\ref{fig:budget}).} MSE over $N\times d$ on a 3-band task.
Floors at $N{=}1000$: $d{=}0$ $4.49$, $d{=}1$ $2.40$, $d{=}3$ $2.05$, $d{=}5$
$1.66$, $d{=}7$ $0.44$ (monotone, since the PIRES ladder adds an integrator then
one resonator per band); no-oracle adaptive baseline $0.47$. The heatmap contours are L-shaped:
width slides you down a noise ramp to a fixed floor; memory lowers the floor
itself. A strict equal-budget comparison (exact $N\cdot d = B$ points,
Fig.~\ref{fig:budget}) shows a threshold: at the smallest budget ($B{=}1050$)
going deep is roughly a wash ($d{=}7$ with $N{=}150$ gives $2.24$ vs.\ $2.37$
for $d{=}1$ with $N{=}1050$), because too few agents leave the measurement noise
unaveraged; but past a minimum SNR ($B\ge 3150$) spending the same states on
memory dominates decisively ($d{=}7$: $0.80$ at $B{=}3150$, $0.29$ at
$B{=}12600$, vs.\ $\approx 2.3$ for $d{=}1$ at every budget). This directly
answers the ``you just gave each agent more memory'' objection.
\emph{Quantifies Props.~\ref{prop:width} and~\ref{prop:depth}.} The mechanism
behind the ladder is visible in the residual spectrum (Fig.~\ref{fig:psd}):
width scales the broadband noise floor, while each unit of internal-model
memory notches out one more disturbance band.

\begin{figure*}[!t]
\centering
\includegraphics[width=0.92\textwidth]{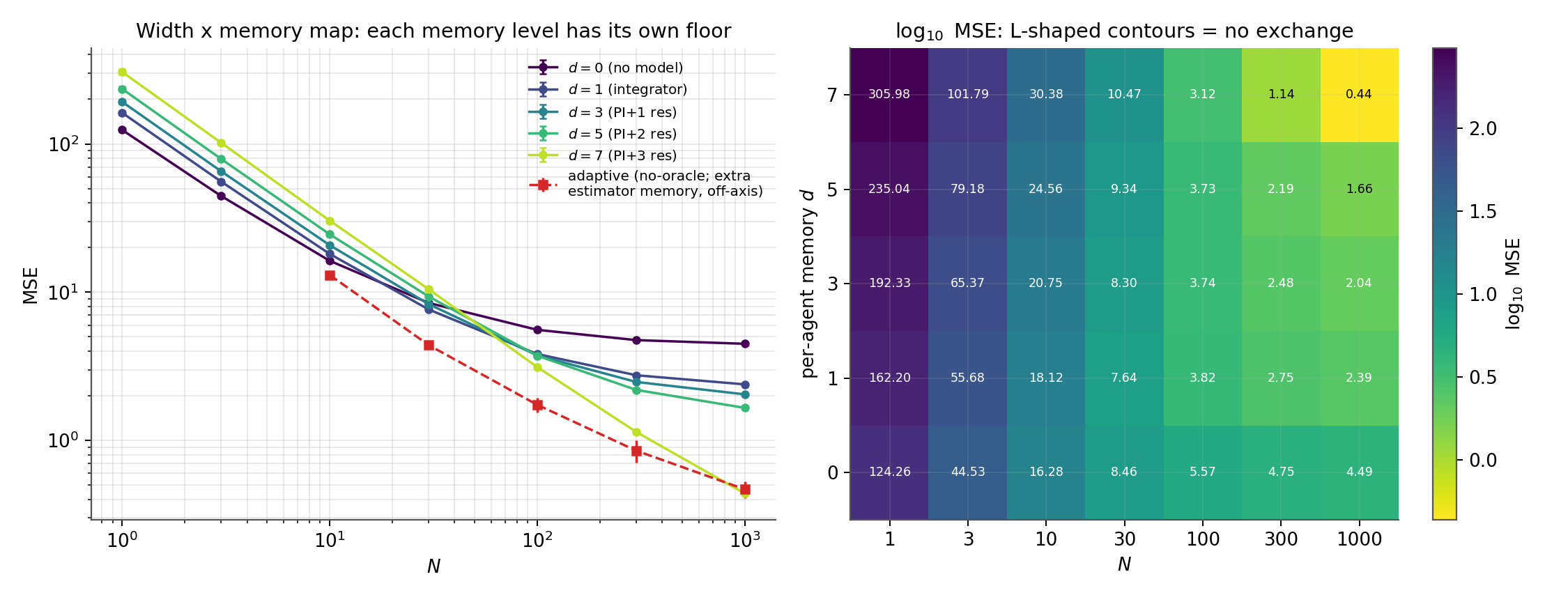}
\caption{Block A: MSE over $N\times d$ on a 3-band disturbance. Left: each
per-agent memory level has its own floor; the adaptive controller (red) starts
worse at small $N$ and approaches the oracle $d{=}7$ floor at large $N$. Right:
heatmap colored by $\log_{10}$ MSE (annotated numbers are the raw MSE, not its
log); the L-shaped contours mean width and memory are not
interchangeable---width slides down a noise ramp to a fixed floor, while only
memory (covering another band) lowers the floor itself.}
\label{fig:map}
\end{figure*}

\begin{figure}[!t]
\centering
\includegraphics[width=\columnwidth]{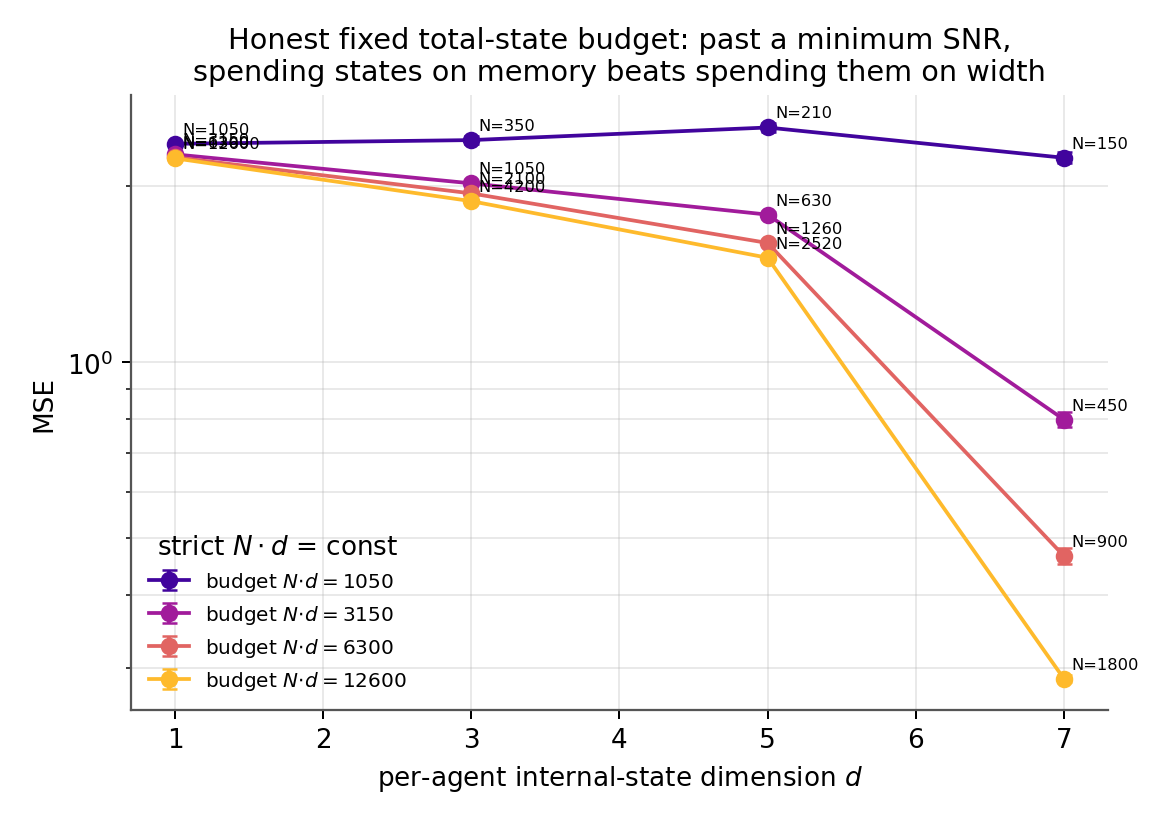}
\caption{Strict equal-budget comparison ($N\cdot d=B$, exact points, budgets
multiples of $\mathrm{lcm}(1,3,5,7)$). At the smallest budget ($B{=}1050$)
measurement noise dominates, so width and depth are equivalent; above a threshold
SNR, depth (internal-model memory) dominates. This answers the ``you just gave
each agent more memory'' objection.}
\label{fig:budget}
\end{figure}

\begin{figure}[!t]
\centering
\includegraphics[width=\columnwidth]{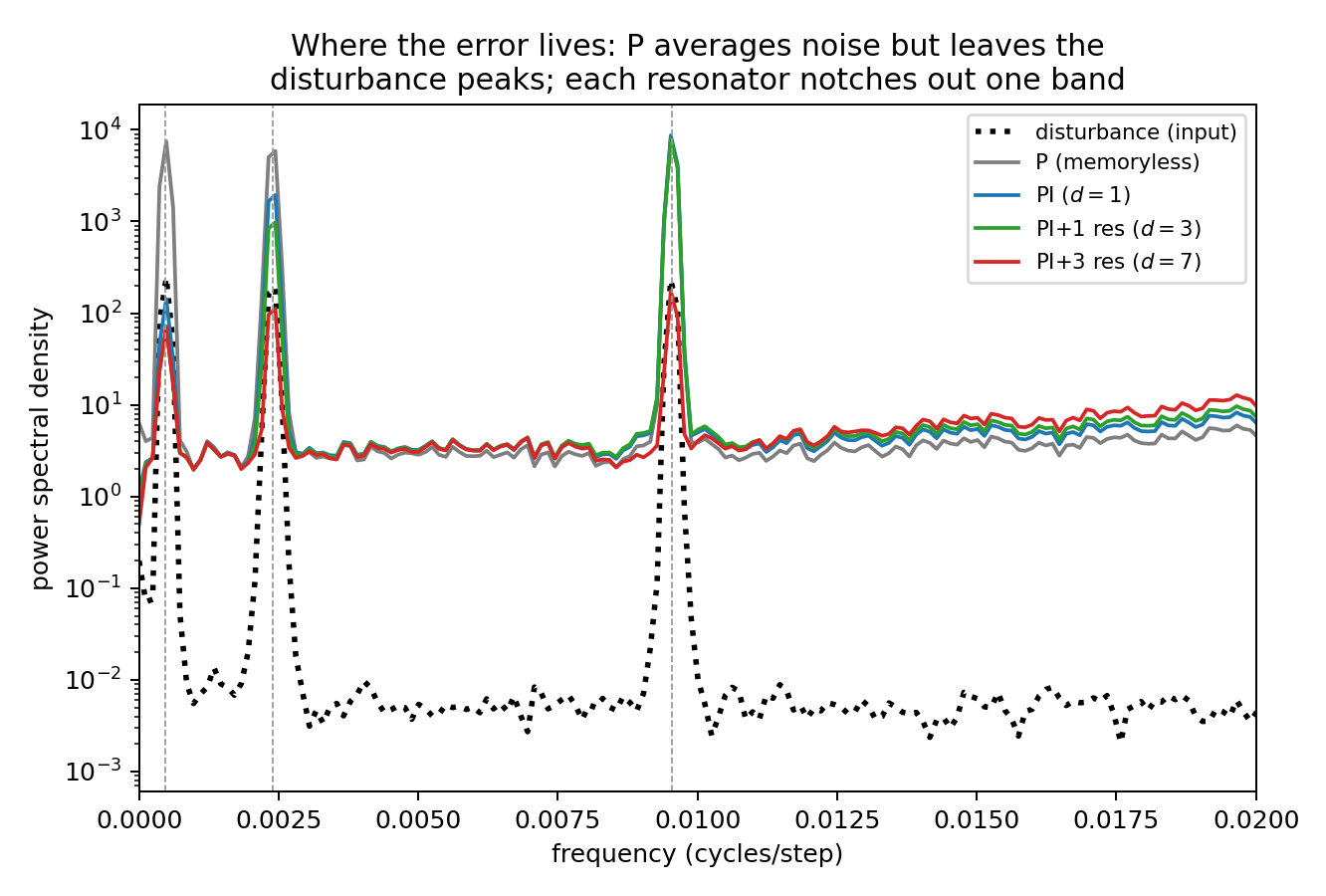}
\caption{Mechanism, in the spectrum. The memoryless P controller averages noise
but leaves the disturbance peaks intact; the integrator suppresses the lowest
band; each resonator notches out one more band. Memory reshapes the residual
spectrum, width only scales it---the internal-model principle made visible.}
\label{fig:psd}
\end{figure}

\textbf{Block B --- robustness and the oracle's price
(Figs.~\ref{fig:mismatch},~\ref{fig:robust2d},~\ref{fig:nonstat}).} \emph{Mismatch $\times$ damping}:
a sharp resonator ($\rho{=}0.9999$) reaches MSE $\approx 0.22$ within $\pm10\%$
of $\omega$ but degrades to $\approx 0.64$ at $30\%$ detune---worse than
model-free PI ($0.49$); a soft resonator ($\rho{=}0.99$) is $\approx 1.0$
everywhere. Depth of suppression $\leftrightarrow$ width of robustness is a
conserved trade-off (a Bode waterbed inside the notch), which the full
mismatch$\times$damping surface (Fig.~\ref{fig:robust2d}) confirms is continuous
across the whole parameter plane, not just at the three damping levels plotted
in Fig.~\ref{fig:mismatch}. The adaptive controller
($0.42$) beats the \emph{strongly} detuned oracles ($\ge 30\%$ error) and the
soft resonator, but a \emph{well-centred} sharp resonator still wins where it is
accurate (MSE $\approx 0.22\text{--}0.33$ within $\pm15\%$). \emph{Nonstationary}:
under chirp the adaptive controller wins ($0.40$ vs.\ $0.41$ oracle, $0.51$ PI);
under regime switching PI ($0.59$) edges out adaptive ($0.79$) because of
re-identification transients---reported honestly. \emph{Supports
Prop.~\ref{prop:adapt} and its failure mode.}

\begin{figure}[!t]
\centering
\includegraphics[width=\columnwidth]{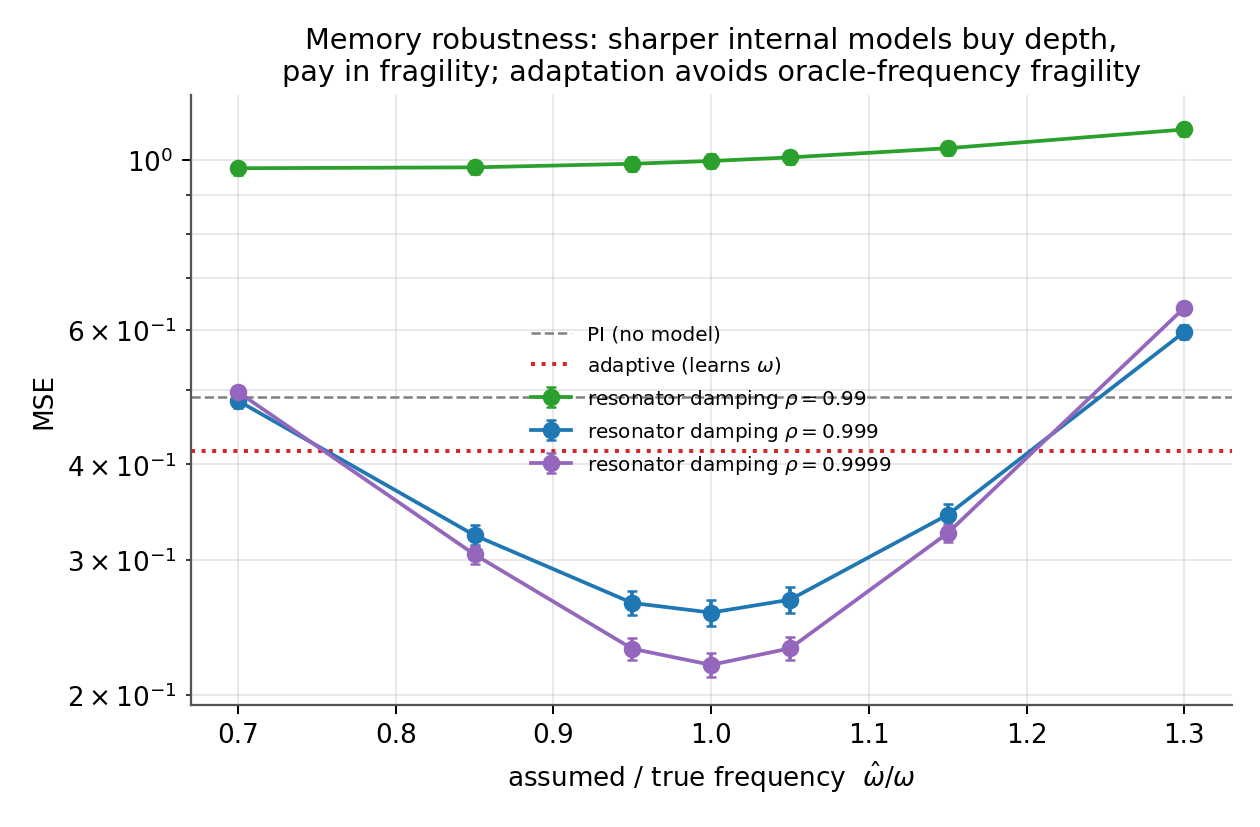}
\caption{Block B: frequency mismatch $\times$ resonator damping. Sharper internal
models buy suppression depth but pay in fragility; adaptation re-centres the
notch and escapes the strongly-mistuned regime.}
\label{fig:mismatch}
\end{figure}

\begin{figure*}[!t]
\centering
\includegraphics[width=0.9\textwidth]{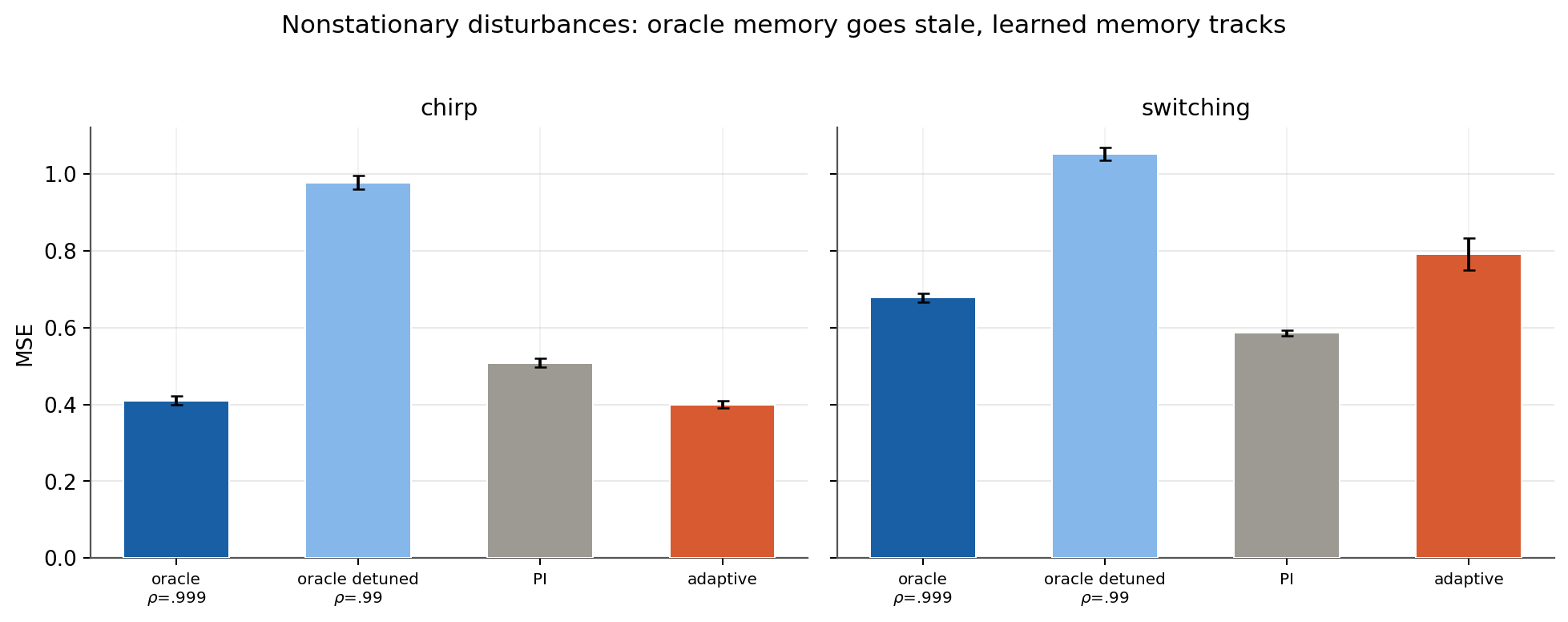}
\caption{Block B, nonstationary disturbances. Under a chirp, learned memory
tracks; under regime switching, a robust memoryless loop (PI) can beat the
adaptive controller because each switch triggers a re-identification transient.}
\label{fig:nonstat}
\end{figure*}

\begin{figure}[!t]
\centering
\includegraphics[width=\columnwidth]{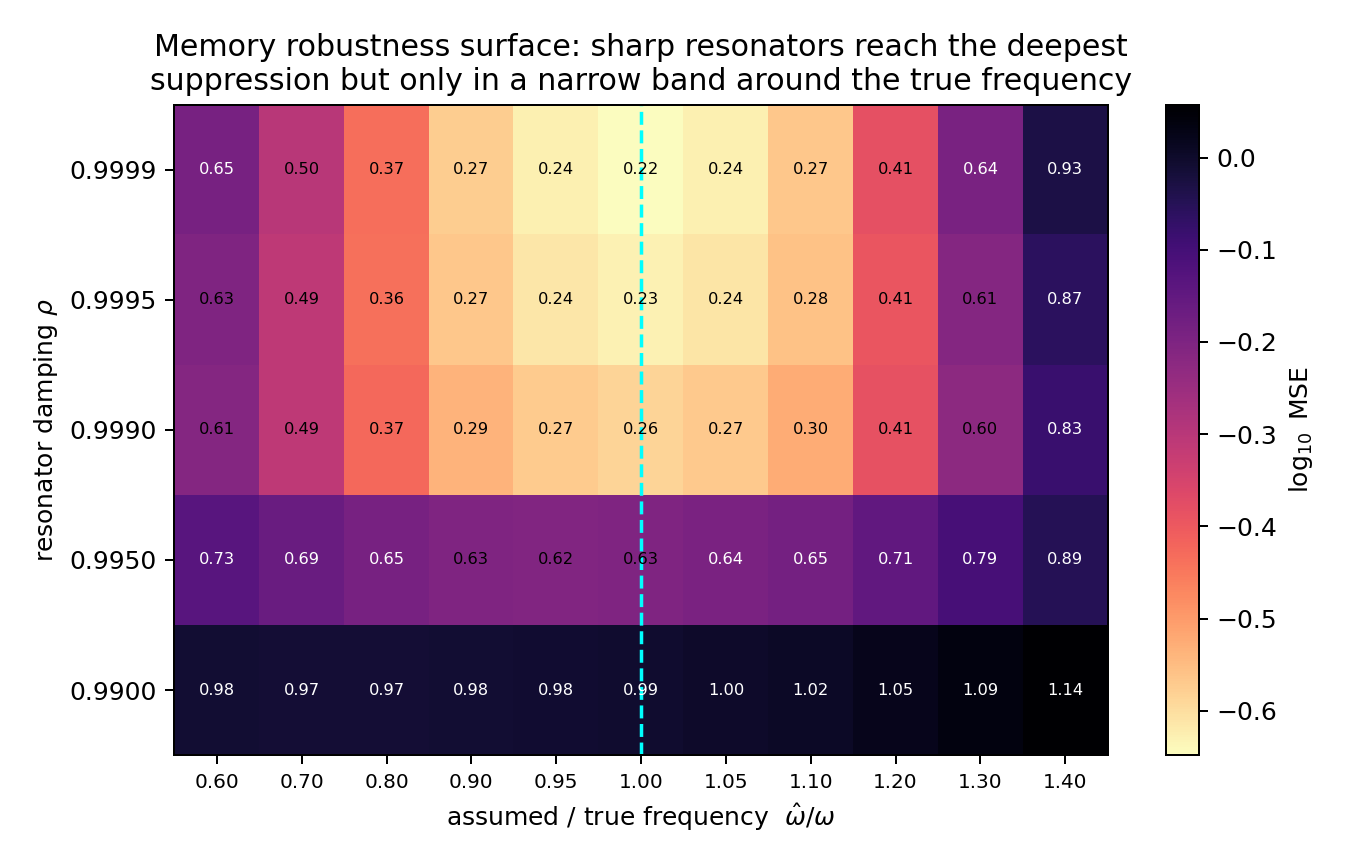}
\caption{Memory robustness surface (Block B). Color: $\log_{10}$ MSE over the
two internal-model parameters, assumed/true frequency $\hat\omega/\omega$ and
resonator damping $\rho$ (annotated numbers are the raw MSE). Sharp resonators
(top rows) reach the deepest suppression but only in a narrow band around the
true frequency; damped resonators (bottom) are robust but shallow---the
depth/robustness waterbed made quantitative.}
\label{fig:robust2d}
\end{figure}

\textbf{Block C --- the delay theorem (Fig.~\ref{fig:delay}).} Across 15
$(a,\tau)$ combinations, the simulated optimal MSE matches the semi-analytic
Riccati floor of Prop.~\ref{prop:delay} to $<1\%$; PI diverges past
$\tau\approx 10\text{--}15$. The floor grows with $\tau$, the noise memory $a$,
and the sensing noise. \emph{Verifies Prop.~\ref{prop:delay}.}

\begin{figure}[!t]
\centering
\includegraphics[width=\columnwidth]{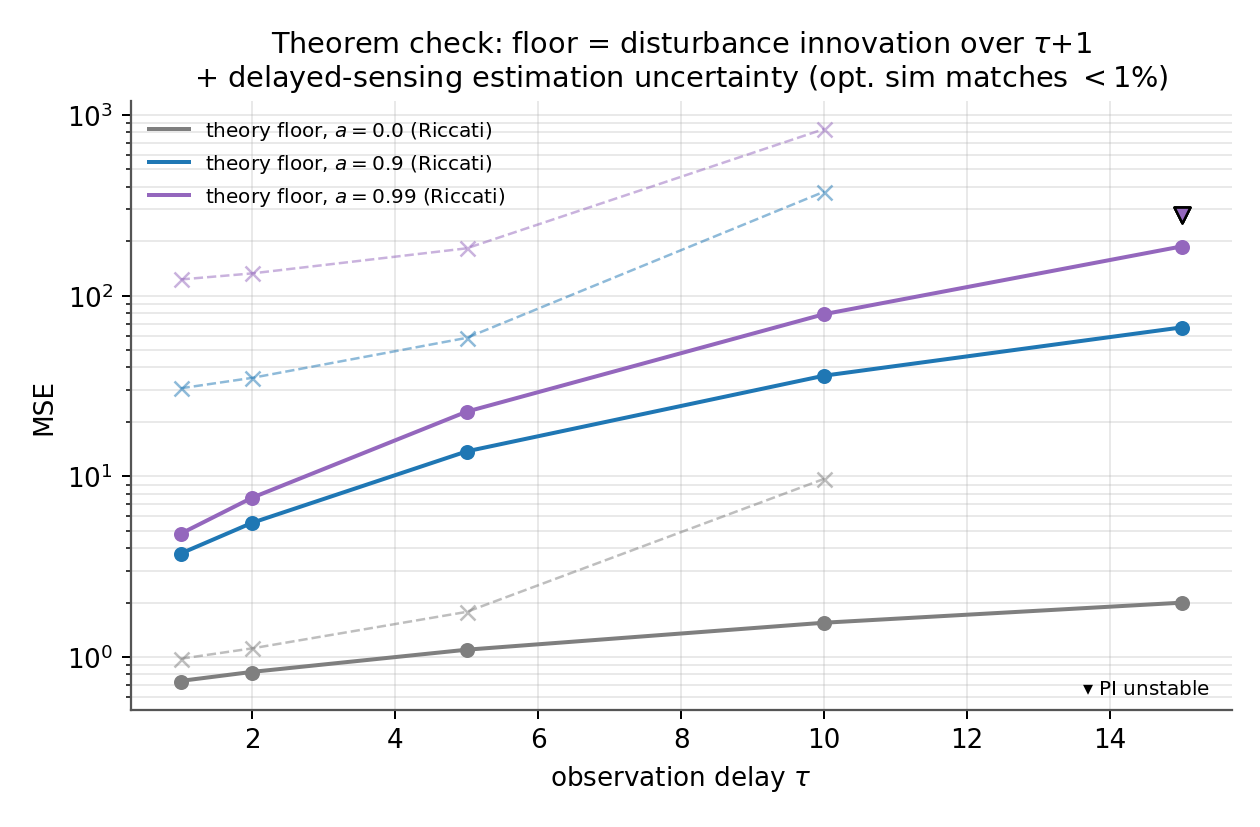}
\caption{Block C: the optimal controller's simulated MSE (markers) matches the
semi-analytic Riccati floor (lines) to $<1\%$ (max $0.89\%$) across delays
$\tau$ and noise memories $a$; the floor is the disturbance innovation over
$\tau{+}1$ plus the delayed-sensing estimation uncertainty. PI (dashed) is shown
where stable and marked ($\blacktriangledown$) where it diverges.}
\label{fig:delay}
\end{figure}

\textbf{Block E --- a mild-nonlinearity robustness check
(Fig.~\ref{fig:nonlinear}).} As a preliminary check---not a claim about general
nonlinear plants---we add a bounded term $0.15\sin x$ (this is neither chaotic
nor non-minimum-phase). The memory ordering holds \emph{once width is
sufficient} ($N{=}300$: $d{=}2$ $1.76 < d{=}1$ $2.32 < d{=}0$ $3.65$), but at
small $N{=}10$ integrative memory can amplify measurement noise---PI ($d{=}1$,
$16.4$) is then worse than plain P ($d{=}0$, $13.6$), while the resonator
($d{=}2$, $6.5$) still leads. Under nonlinear+switching the stale oracle
resonator ($2.07$) and adaptive ($2.45$) both beat PI ($3.75$). The linear
resource accounting is thus robust to this class of \emph{bounded}
nonlinearities; generalization to unbounded nonlinearities (e.g.\ $x^3$, or
non-minimum-phase plants) is left for future work.

\textbf{Block F --- spatial generalization (Fig.~\ref{fig:spatial}).} To check
that the accounting is not an artifact of the scalar testbed, we replace the
single object with a ring of $12$ diffusively-coupled integrator nodes driven by
a \emph{shared slow mode} plus local per-node noise; each node runs a local
controller fed by $N$ agents' delayed, noisy observations. Once the population is
large enough to average the local noise, the same ordering emerges: the
memoryless controller (P) saturates at a floor ($\approx 3.7$)
independent of $N$ because the shared structured mode survives averaging, the
integrator (PI) does better, and a resonator tuned to the shared mode reaches
the lowest floor ($0.23$ at $N{=}300$). Width cures the local noise; only memory
cures the shared structure---exactly as in the scalar case. This further holds on
a two-dimensional $6\times6$ grid and, more strikingly, under \emph{partial
observability} (Fig.~\ref{fig:spatial2d}): when only half the nodes carry
sensors and actuators, width alone (P) is stuck at a large floor ($\approx18.5$),
but a resonator on the shared mode still attains the lowest error---the memory
ordering survives both the coupling dimension and incomplete sensing.

\begin{figure*}[!t]
\centering
\includegraphics[width=0.9\textwidth]{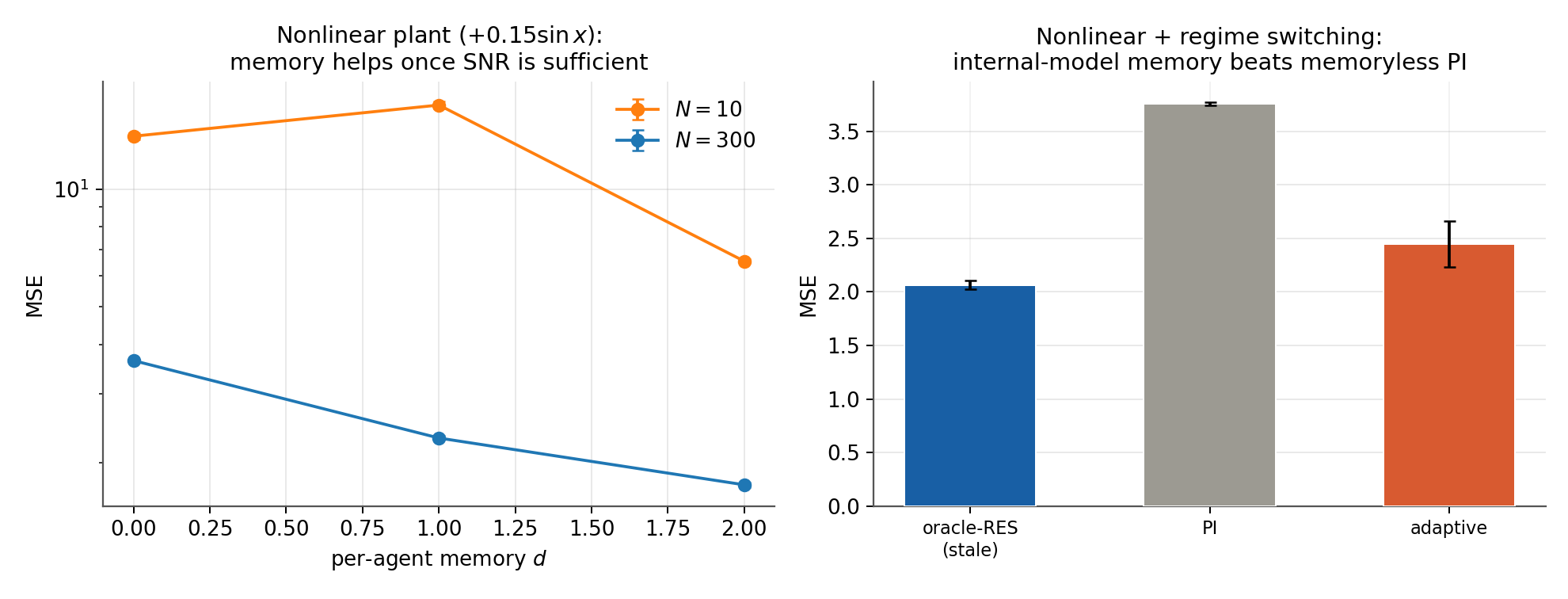}
\caption{Block E: on a nonlinear plant, the memory ordering persists once width
is sufficient (left); under nonlinear plus regime switching, internal-model
memory (a stale oracle resonator and the adaptive controller) beats memoryless
PI (right). The oracle resonator still edges the adaptive controller here.}
\label{fig:nonlinear}
\end{figure*}

\begin{figure}[!t]
\centering
\includegraphics[width=\columnwidth]{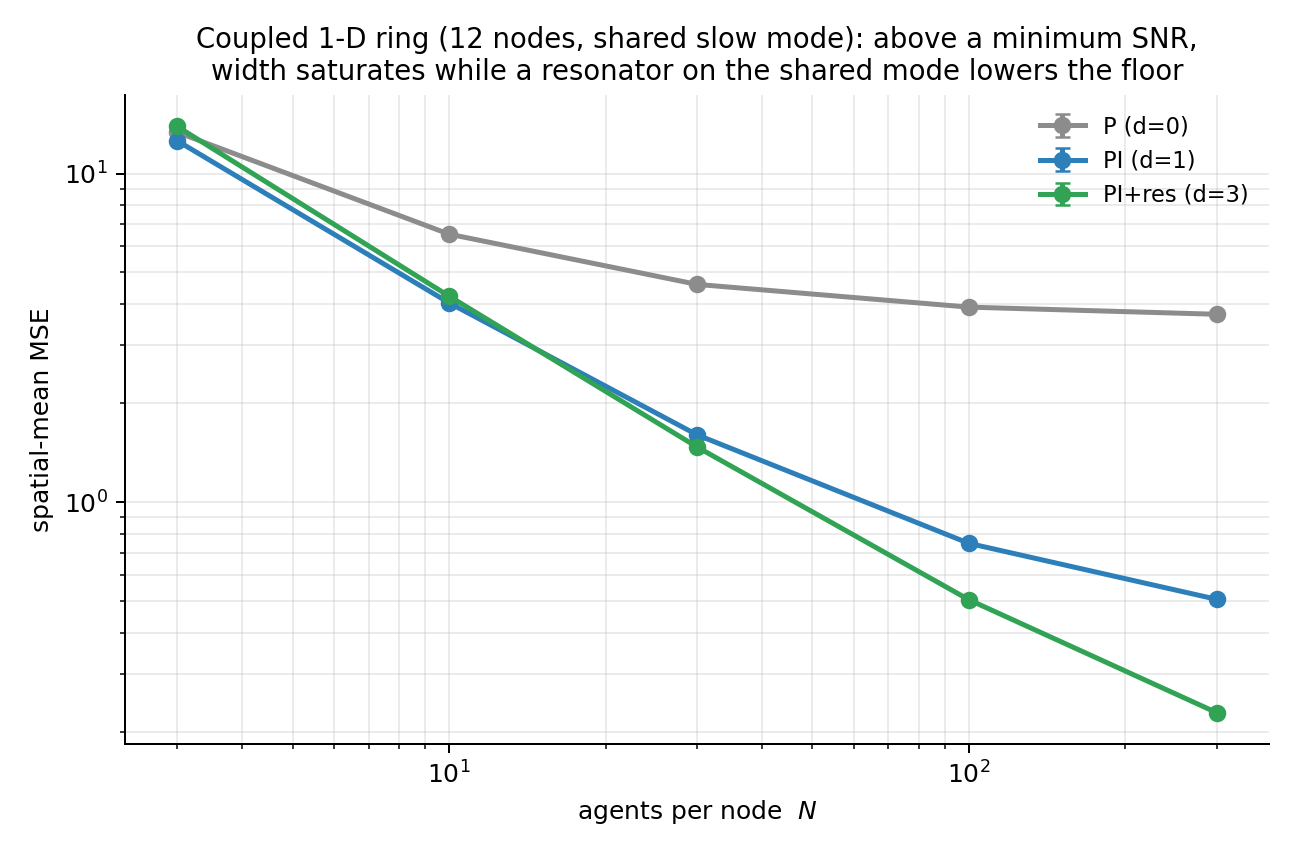}
\caption{Block F: a ring of 12 diffusively-coupled nodes with a shared slow
disturbance mode. Width (P, gray) saturates at an $N$-independent floor because
the shared mode survives averaging; an integrator (PI) improves; a resonator on
the shared mode (green) reaches the lowest floor once the population is large
enough to average the local noise (the ordering emerges above a minimum SNR).
The scalar-case accounting carries over to the spatially-extended plant.}
\label{fig:spatial}
\end{figure}

\begin{figure*}[!t]
\centering
\includegraphics[width=0.92\textwidth]{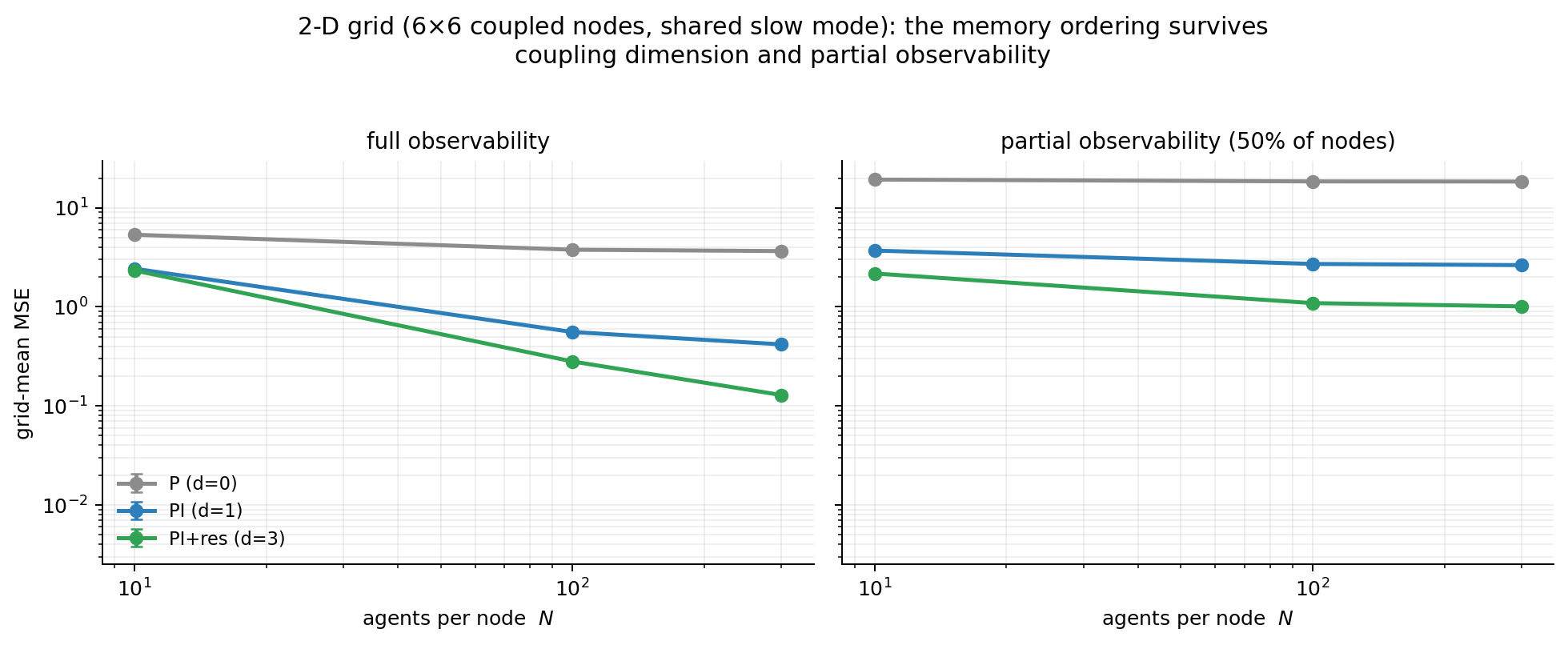}
\caption{Block F, continued: a two-dimensional $6\times6$ coupled grid. Left:
full observability. Right: partial observability, where only half the nodes
carry sensors/actuators and the rest are steered through diffusion. In both
cases the memory ordering (P $>$ PI $>$ PI+resonator) is preserved; partial
sensing raises every floor but does not change which resource lowers it.}
\label{fig:spatial2d}
\end{figure*}

\section{The Resource Triangle and Design Rules}
\label{sec:triangle}
Table~\ref{tab:triangle} summarizes the accounting, and Fig.~\ref{fig:zones}
reads it off the width$\times$memory map as three regimes: a variance-limited
corner cured by width, a memory-limited corner cured by internal-model states,
and the delay/innovation floor that neither removes.

\begin{figure}[!t]
\centering
\includegraphics[width=\columnwidth]{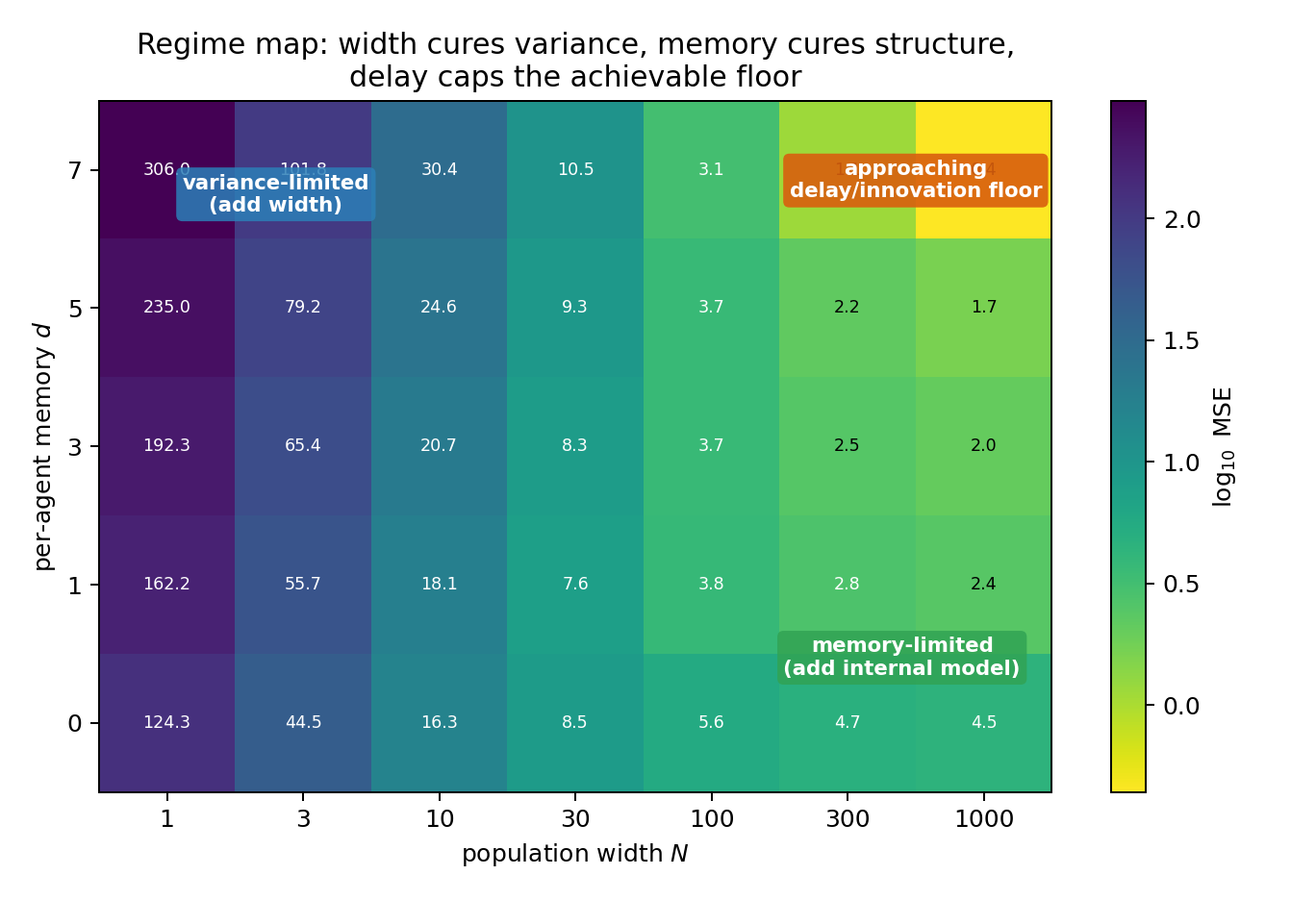}
\caption{Regime map, colored by $\log_{10}$ MSE (annotated numbers are the raw
MSE). At low $N$ the system is variance-limited (add width); at high $N$, low
$d$ it is memory-limited (add an internal-model state); the bottom-right corner
approaches the delay/innovation floor that neither resource can lower.}
\label{fig:zones}
\end{figure}

\begin{table*}[!t]
\caption{The resource triangle: what each resource buys, what it cannot, and the governing law.}
\label{tab:triangle}
\centering
\renewcommand{\arraystretch}{1.3}
\begin{tabular}{@{}p{2.6cm} p{3.6cm} p{3.6cm} p{5.2cm}@{}}
\toprule
\textbf{Resource} & \textbf{Buys} & \textbf{Does \emph{not} buy} & \textbf{Rate / law}\\
\midrule
Width $N$ & averaging of i.i.d.\ noise & structural/spectral content &
$\propto 1/N$ down to a floor; saturates at $N^\ast$\\
Per-agent memory $d$ & spectral coverage, internal model & delay compensation by
itself & need $d \ge 2\cdot\#\text{bands}$; floor set by uncovered power (Prop.~\ref{prop:width})\\
Prediction over $\tau$ & cancels the \emph{predictable} part of the disturbance (shifts the floor curve down) & the \emph{innovation}: the unpredictable noise over the horizon is never removed & floor $=$ disturbance innovation over $\tau{+}1$ \emph{plus} delayed-sensing estimation uncertainty (Prop.~\ref{prop:delay})\\
Adaptation & replaces oracle knowledge of $\omega$ & instant tracking &
$\approx$ oracle $-$ small gap; loses under fast switching (Prop.~\ref{prop:adapt})\\
Resonator damping & width of robustness & depth at that width & notch waterbed:
depth $\times$ bandwidth $\approx$ const\\
\bottomrule
\end{tabular}
\end{table*}

\textbf{Golden rules.}
\begin{enumerate}
\item \emph{Count the disturbance modes first.} Each persistent band needs its
own internal-model state ($\sim 2$ states/band). No number of agents substitutes
for a missing mode.
\item \emph{Add width only up to $N^\ast$.} Beyond the point where $1/N$
reduction meets the structural floor, agents are ballast; $N^\ast$ is computable
from $\sigma_m^2$ and the target floor.
\item \emph{Latency is the hard wall.} The floor equals the environment's
unpredictability over the observation (feedback) delay. Buy a predictor or cut latency, not
more agents.
\item \emph{Under model uncertainty, detune or adapt.} Trade notch depth for
robustness width; go adaptive only if regimes outlive the identification
transient.
\end{enumerate}

\section{Limitations and Outlook}
\label{sec:limits}
The core testbed is a single scalar object observed by $N$ homogeneous agents,
which isolates the width/memory/delay axes; Block~F confirms the accounting
survives on a spatially-extended chain and a two-dimensional grid, including
under 50\% partial observability (Figs.~\ref{fig:spatial},~\ref{fig:spatial2d}).
Further, the plant is a 1-D integrator per node (a nonlinear variant is tested,
but genuinely chaotic, higher-order, or non-minimum-phase plants are not); the
tuning grid is coarse (the \emph{ordering} of floors is robust, though constants
may shift); the adaptive controller assumes a few narrow tones (broadband or
heavy-tailed disturbances are out of scope); it loses to a robust PI under fast
switching (reported, not tuned away). The hierarchical baseline is a single
tuned two-loop cascade; Proposition~\ref{prop:depth} is a counterexample to a
universal claim and does not rule out a better-designed hierarchy (e.g.\
$H_\infty$ or model-based nested predictors) matching the flat swarm. We also
assume no inter-agent communication; Section~\ref{sec:setup} discusses how
distributed/consensus estimation could partially close the gap to the
centralized frontier. Measurement noise is i.i.d.\ across agents and actuators
are unconstrained; correlated or heterogeneous sensing and actuator saturation
are not modeled. We do not claim to prove impossibility results; we exhibit a
counterexample to a strong claim and propose a sharper, classically-grounded
resource model, verified empirically. Each of these---a second hierarchical
design, distributed estimation, richer plant classes, correlated/heterogeneous
noise, and input constraints---is a concrete, testable extension of the same
resource accounting rather than a departure from it.

\section{Discussion}
\label{sec:discussion}
\textbf{A trilemma, not a dichotomy.} The debate the target preprint joins is
usually posed as flat \emph{versus} deep. Our results recast it as a triangle of
three non-interchangeable resources (Fig.~\ref{fig:concept}): width buys
averaging, internal-model memory buys structure, and delay imposes a floor.
Unlike the blockchain ``scalability trilemma'', where one must sacrifice one of
three desirable properties, here the three resources are not competing goods but
\emph{different jobs}: the lesson is that no amount of one substitutes for
another, and that the ``depth'' the pessimistic view demands is bought by
recurrent internal-model memory rather than by architectural nesting.

\textbf{A biological reading.} The account has a suggestive parallel in social
insects. A colony does not build a hierarchy of ``generals'' to steer each ant;
hierarchy is expensive and slow. Instead evolution hard-wires into each
individual an internal model of the world---instinct---while the size of the
swarm (width) averages out individual sensing noise. On our axes, instinct is
per-agent internal-model memory, colony size is width, and the collective
thrives precisely because structured regularities of the environment are encoded
in every agent rather than computed by a central authority. The same account
predicts the failure mode: when the environment changes faster than instincts
can be updated---faster than the identification time in our Block~B switching
experiment---the swarm cannot re-tune every individual quickly, and a rigid
internal model is worse than a simple robust reflex. Two caveats keep the analogy
honest: many insects retain limited online plasticity (bees learn), and
evolution itself is a slow \emph{outer} adaptation loop on a generational
timescale---so the colony does adapt, but on a separated, much slower timescale,
which is itself an instance of the depth-as-timescale-separation idea rather than
a counterexample to it. We offer this as intuition, not mechanism.

\textbf{Implications for LLM-agent systems.} With the caveat that our testbed is
a physical control system, the accounting offers a hypothesis for cognitive
collectives. If ``memory'' maps to an agent's context and retrieval (its internal
model of the task domain), ``width'' to the number of parallel agents, and
``delay'' to reasoning or communication latency, then the recurring empirical
findings---that a strong single agent with adequate memory can match a
multi-agent workflow \cite{xu2026single,tran2026single}, and that scaling helps
mainly when the added agents bring \emph{diversity} \cite{qian2024scaling}---are
exactly what the trilemma predicts: replication (width) averages noise but cannot
add missing domain structure, while diversity widens coverage. Whether the
quantitative laws transfer is an open question our released harness is meant to
probe.

\section{Conclusion}
Recasting the flat/deep question as resource accounting resolves the apparent
paradox in both control swarms and LLM-agent collectives: width buys only the
averageable part of the uncertainty, internal-model memory buys the structured
part, and delay imposes an innovation floor that neither can remove.
``Hierarchy'' is one way to organize predictive memory across timescales, not the
only one---which is why a strong \emph{single} agent with adequate memory can
match a multi-agent workflow \cite{xu2026single,tran2026single}, and why scaling
helps mainly when it adds \emph{diversity} that widens spectral coverage
\cite{qian2024scaling}. Whether the same accounting quantitatively governs
language-model agents---where ``memory'' is context and retrieval and ``delay''
is reasoning latency---is a concrete and, we believe, testable next question; our
released harness (Appendix~\ref{app:explore}) is a first step.

\appendices
%\clearpage
\section{Exploratory Studies}
\label{app:explore}
These studies are exploratory and not part of the main evidential line.

\textbf{A. Emergent timescale specialization (E6).} Flat collectives of
learnable leaky integrators trained by an evolution strategy \cite{salimans2017es}
reliably split off a \emph{single} slow agent (learned timescales $20\text{--}1500$
vs.\ $1\text{--}8$ for the rest); the heterogeneous multi-band condition is more
stable ($0/5$ unstable runs) than the homogeneous one ($2/5$). This suggests the
slow state that Prop.~\ref{prop:depth} inserts by hand can instead \emph{emerge}
in a learned flat collective---heterogeneity creating a form of virtual temporal
depth. We stress the exploratory status: populations are tiny (8 agents), the ES
is simple, and a slow specialist appears (less consistently) even in the
single-band control, so the evidence supports the heterogeneous-vs-homogeneous
contrast, not a sharp multi-band claim.

\textbf{B. LLM-agent analogue (E7, harness only).} We provide a harness casting
flat, flat-ensemble, and ``deep'' (fast-reactor plus slow-strategist) LLM agents
onto the same regulation task, with a mock backend for pipeline validation and an
API backend for real runs. The mock backend yields \emph{no} scientific result;
the harness is offered as a future benchmark for whether the width/memory/delay
accounting transfers to language-model agents.

\begin{figure}[!t]
\centering
\includegraphics[width=\columnwidth]{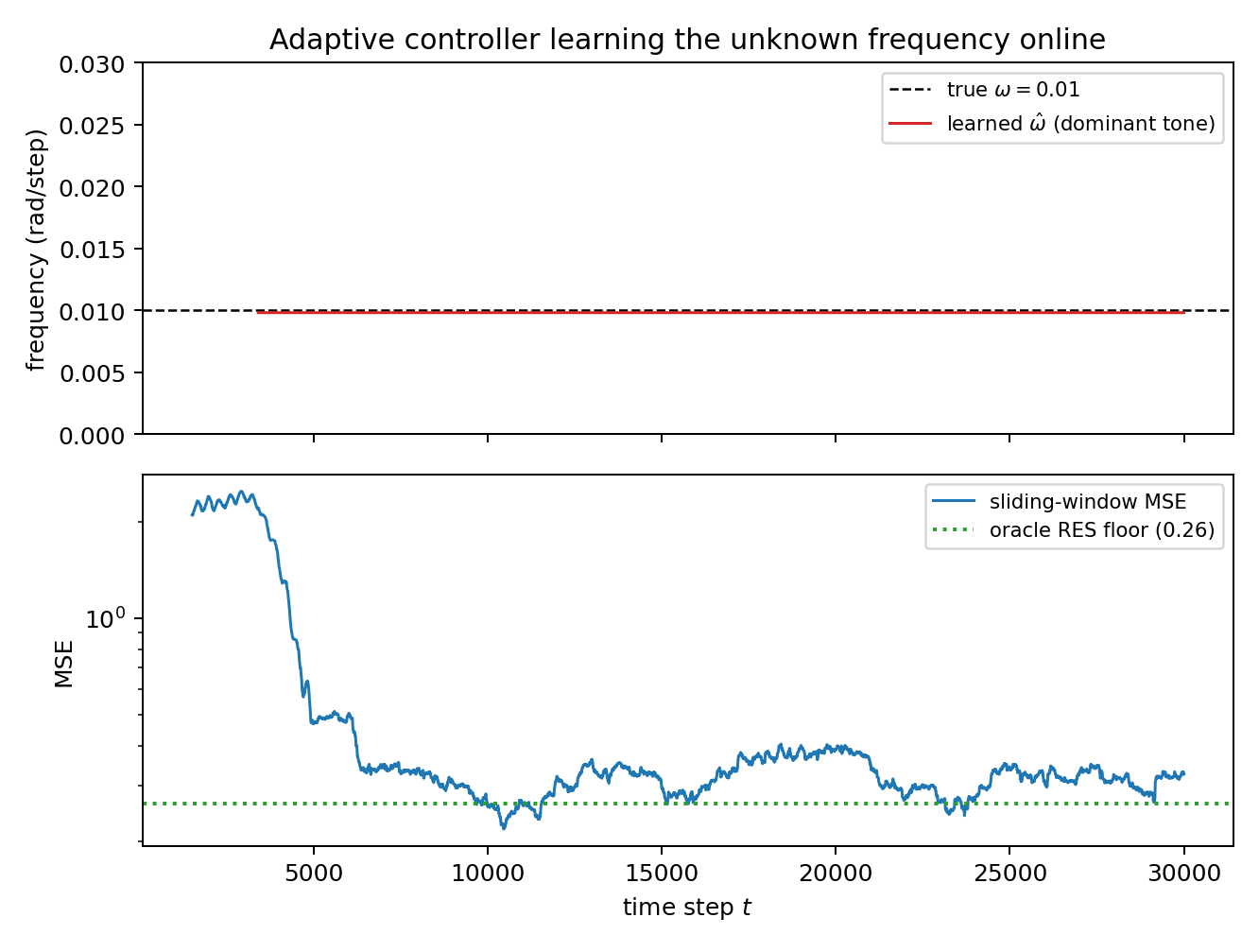}
\caption{Adaptive controller internals (no oracle). Top: the online frequency
estimate locks onto the true unknown $\omega$ within a few thousand steps.
Bottom: the sliding-window MSE falls toward the oracle resonator floor as the
model converges---illustrating the identification threshold and convergence gap
of Proposition~\ref{prop:adapt}.}
\label{fig:adaptivedyn}
\end{figure}

\textbf{C. Adaptive controller dynamics (Fig.~\ref{fig:adaptivedyn}).} The
no-oracle adaptive controller is not a black box: its periodogram-driven
frequency estimate converges to the true $\omega$ online, and its
sliding-window error approaches the oracle floor once the estimate locks,
making concrete the threshold and gap of Proposition~\ref{prop:adapt}.

\section*{Data and Code Availability}
All code, configurations, seeds, and raw results needed to reproduce every
figure and number in this paper are released under the MIT license at
\url{https://github.com/KuznetsovKarazin/causal-depth-limits}, with a
versioned, citable snapshot archived on Zenodo (DOI:
\href{https://doi.org/10.5281/zenodo.21295164}{10.5281/zenodo.21295164}). Results
regenerate deterministically from the released seeds; a Dockerfile, a Conda
environment, and continuous integration (\texttt{make smoke},
\texttt{make check}) are included. No proprietary data or paid API access is
required for the core results; the optional real-LLM harness (Appendix,
\texttt{make e7-real}) needs an Anthropic API key and its output is excluded
from the paper's claims.

\bibliographystyle{IEEEtran}
\bibliography{refs}

\begin{IEEEbiography}[{\includegraphics[width=1in,height=1.25in,clip,keepaspectratio]{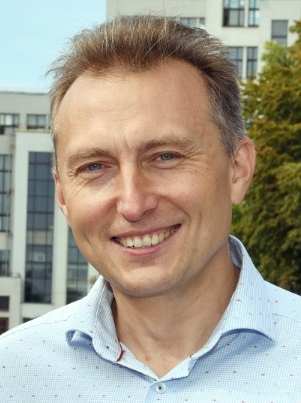}}]{Oleksandr Kuznetsov}
(Member, IEEE) is an Associate Professor at the Department of Theoretical and
Applied Sciences, eCampus University, Italy, and a Full Professor at the
Department of Intelligent Software Systems and Technologies, V.N. Karazin Kharkiv
National University, Ukraine.
His distinguished career spans over two decades with leadership roles across
academia, industry, and defense sectors.
Professor Kuznetsov's research expertise encompasses applied cryptography,
post-quantum security, blockchain and decentralized systems, artificial
intelligence in cybersecurity, biometric authentication, steganography, and IoT
security.
\end{IEEEbiography}

\begin{IEEEbiography}[{\includegraphics[width=1in,height=1.25in,clip,keepaspectratio]{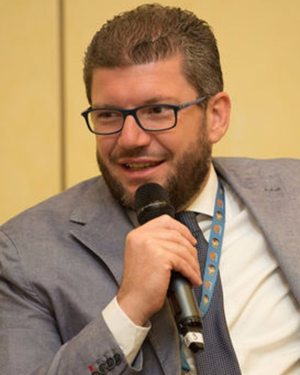}}]{Emanuele Frontoni}
(Member, IEEE) is a Full Professor of computer science with the University of
Macerata and the Co-Director of the VRAI Vision Robotics \& Artificial
Intelligence Lab.
His research interests include computer vision and artificial intelligence with
applications in robotics, video analysis, human behavior analysis, extended
reality and digital humanities.
He is the author of over 230 international articles and collaborates with numerous
national and international companies in technology transfer and innovation
activities.
He has been program chair or general chair of various international conferences
and summer schools (e.g.\ IEEE/ASME MESA 2016 and 2017, IEEE ECMR 2017, BigDat
2020, DeepLearn 2021) and co-organizer of many international workshops (e.g.\
DeepRetail@ICPR 2020, D2CH@CVPR 2021, AI4DH@ICIAP 2022).
\end{IEEEbiography}

\end{document}